\newtheorem{theorem}{Theorem}
\newtheorem{lemma}{Lemma}
\newtheorem{assumption}{Assumption}
\newtheorem{remark}{Remark}
\title{Passivity-based sliding mode control for mechanical port-Hamiltonian systems
}
\author{
  Naoki Sakata, Kenji Fujimoto, Ichiro Maruta \\
  Graduate School of Engineering \\
  Kyoto University \\
  \texttt{sakata.naoki.25x@st.kyoto-u.ac.jp} \\
  \texttt{k.fujimoto@ieee.org} \\
  \texttt{maruta@kuaero.kyoto-u.ac.jp} \\
}
\begin{document}
\maketitle

\begin{abstract}
In this work, we propose a new passivity-based sliding mode control method for mechanical port-Hamiltonian systems. Passivity-based sliding mode control (PBSMC) is unification of sliding mode control and passivity-based control. It achieves sliding mode control and Lyapunov stability simultaneously by employing an energy based Lyapunov function. The proposed method gives a family of stabilizing controllers which smoothly interpolates passivity-based control and sliding mode control with free parameters. The freedom is useful to adjust the trade-off between robustness against external disturbances and undesired chattering vibration. In addition, this paper relaxes the restrictive condition which is required in the authors' former result. As a result, we can apply the proposed PBSMC method to trajectory tracking control problems. Furthermore, the robustness of the proposed controller against matched and unmatched disturbances is investigated. Numerical examples demonstrate the effectiveness of the proposed method.
\end{abstract}

\keywords{Lyapunov method, passivity-based control, sliding mode control.}

\section{Introduction}

\label{sec:introduction}
Port-Hamiltonian systems are a class of nonlinear systems described by Hamilton's canonical equation, and many physical systems are described in this form, e.g., mechanical systems \cite{Shaft2000}, electro-mechanical systems\cite{rodriguez2003}, nonholonomic systems \cite{Fujimoto2012nonholonomic}, and so on. In order to stabilize such systems, passivity-based control techniques are often employed. For the class of systems, the Hamiltonian function, which represents the total energy of the system, is reshaped by an appropriate input so that the resulting closed-loop system has a desired Hamiltonian function which serves as a Lyapunov function candidate. This method is called energy shaping, and many related techniques have been proposed so far, e.g.\cite{Fujimoto2001,Ortega2002}. Kinetic-potential energy shaping (KPES) method \cite{Ferguson2019} is one of these methods for mechanical port-Hamiltonian systems. It allows us to select a class of potential functions whose arguments are both configuration and momentum. Related works have been studied, e.g., \cite{Venkatraman2010,Romero2015}.

On the other hand, sliding mode control is a nonlinear control method belonging to variable structure control, and the controller is known to be robust against model uncertainties and external disturbances. See, e.g., \cite{Slotine1991,Ferrara2019} for details. In this method, the state of the plant system is enforced to be constrained in a subspace called a sliding surface where the state of the system evolves according to the desired dynamics. To achieve sliding mode control, discontinuous high gain feedback is employed to enforce the state to the sliding surface. However, such input induces chattering problems, which may damage the plant system. For this problem, some methods to alleviate the chattering phenomena have been studied, e.g., replacing the discontinuous input with a continuous one \cite{Slotine1991}, high-order sliding mode control \cite{Levant2003}, and super-twisting algorithm \cite{moreno2011}.

Recently, the authors have proposed a prototype of a passivity-based sliding mode controller for mechanical and electro-mechanical port-Hamiltonian systems \cite{Fujimoto2021,Fujimoto2022}. By selecting a non-smooth function as an artificial potential function with kinetic-potential energy shaping, the resulting controller works as a sliding mode controller. Since sliding mode control is achieved in the framework of passivity-based control, Lyapunov stability is ensured even if the input is replaced by a smooth approximation of the sliding mode control law to alleviate chattering phenomena.

In this paper, we proposed a new passivity-based controller that achieves sliding mode control and Lyapunov stability simultaneously. Firstly, we generalize the KPES method so that the closed-loop system has an artificial potential function whose argument is nonlinear with respect to both configuration and momentum, whereas such argument is linear with respect to momentum in the previous methods \cite{Ferguson2019,Fujimoto2021}. Secondly, we show that sliding mode control is realized with Lyapunov stability by selecting an appropriate potential function. In the authors' former result \cite{Fujimoto2021}, the closed-loop system consists of multiple scalar sliding mode subsystems. Such a realization requires a restrictive condition to be satisfied. On the other hand, we prove the stability of multiple sliding modes which can be realized under a relaxed condition. This result makes it possible to apply the proposed method to several control problems, particularly, to trajectory tracking control. In addition, we relax the design condition to achieve sliding mode that allows additional freedom in selecting artificial potential functions. Utilizing such additional freedom, a parameterization of stabilizing controllers consisting of both standard passivity-based (Lyapunov based) controllers and sliding mode ones is obtained. Thus, we can freely adjust the trade-off between robustness against external disturbances and undesired chattering vibration without losing Lyapunov stability. Furthermore, for the control system to which the proposed method is adopted, the robustness against matched and unmatched disturbances is analyzed. A numerical example shows how the proposed controller works.


The remainder of this paper is organized as follows. First, we introduce passivity-based control, kinetic-potential energy shaping method, and sliding mode control in Section~2.  A generalized version of KPES and a new PBSMC method is proposed in Section~3. Section~4 shows application of the proposed method to trajectory tracking control problems. Moreover, the robustness of the proposed controller against matched and unmatched disturbances is investigated in Section~5. The effectiveness of the proposed controller is shown through numerical examples in Section~6. Finally, we conclude this paper in Section~7.

\subsection*{Notation}
The symbol $I_n$ denotes the $n\times n$ identity matrix, and $0_{n}$ is the $n\times n$ matrix of zeros.
For a vector $a$ and a symmetric matrix $B$, $\|a\|^2_B = a^\top B a$.
The symbol $\nabla_{(\cdot)}$ denotes the gradient with respect to $(\cdot)$, that is $ \nabla_x f \equiv \frac{\partial f}{\partial x}^\top = \left( \frac{\partial f}{\partial x_1}, \frac{\partial f}{\partial x_2}, \dots, \frac{\partial f}{\partial x_n} \right)^\top$ with $(x_1, x_2, \dots, x_n )^\top.$
For a vector $x$, the symbol $\|x\|_p$ denotes a $p$-norm of $x$ defined by $\|x\|_p = (|x_1|^p + \dots, + |x_n|^p)^{\frac{1}{p}}$ and, especially, we omit the subscript in the case $p=2$, that is, $\|x\| = \|x\|_2$.


\section{Preliminaries}
In this section, we introduce the backgrounds of port-Hamiltonian systems, kinetic potential energy shaping, and sliding mode control.

\subsection{Port-Hamiltonian systems}
Let us consider a fully-actuated mechanical system described as the following port-Hamiltonian form \cite{Shaft2000},
\begin{align}
    &\begin{pmatrix} \dot{q} \\ \dot{p} \end{pmatrix} = 
    \begin{pmatrix} 0_m & I_m \\ -I_m &  -D_0(q,p) \end{pmatrix}
    \begin{pmatrix} \nabla_q H_0(q,p) \\ \nabla_p H_0(q,p) \end{pmatrix} + 
    \begin{pmatrix} 0_m \\ G_0(q) \end{pmatrix} u, \notag \\
    &H_0(q,p) = \frac{1}{2}p^\top M(q)^{-1} p. \label{Original_PH_sys}
\end{align}
Here $(q^\top, p^\top)^\top \in\mathbb{R}^{2m}$ denotes the state vector of the plant system, $q\in \mathbb{R}^m$ and $p\in \mathbb{R}^m$ denote the configuration vector and the momentum vector respectively, and $u\in \mathbb{R}^m$ denotes the input vector.
The symbol $M(q) = M(q)^\top\in\mathbb{R}^{m\times m}$ denotes the inertia matrix and it is a positive definite matrix.
The matrix $D_0(q,p)\in \mathbb{R}^{m\times m}$ is the damping matrix which is positive semi-definite.
The matrix $G_{0}(q)\in \mathbb{R}^{m\times m}$ denotes the full rank input mapping matrix.
The symbol $H_0(q,p)\in\mathbb{R}$ is called the Hamiltonian function which represents the total energy of the system.

\subsection{Momentum transformation and kinetic-potential energy shaping}
For a class of mechanical port-Hamiltonian systems, the change of coordinates in momenta is often used so that the kinetic energy becomes independent of $q$, e.g.\cite{Venkatraman2010,Romero2015,Ferguson2019}.
Kinetic-potential energy shaping method \cite{Ferguson2019} is one of these techniques, and we can select a potential function whose arguments are both configuration and momentum.

Let us consider the following coordinate transformation to the plant system \eqref{Original_PH_sys}
\begin{align}
    x = \begin{pmatrix} q \\ \eta \end{pmatrix} \equiv \begin{pmatrix} q \\ T(q)^\top p \end{pmatrix}, \label{Def_suqare_coordinate}
\end{align}
where $T(q)\in\mathbb{R}^{m\times m}$ is a nonsingular matrix satisfying
\begin{align}
    T(q)T(q)^\top = M(q)^{-1}. \label{Definition_T}
\end{align}
Then the system \eqref{Original_PH_sys} is transformed into the following one with a new Hamiltonian function $H(\eta)=(1/2)\|\eta\|^2$
\begin{align}
    \begin{pmatrix} \dot{q} \\ \dot{\eta} \end{pmatrix} =
    \underbrace{%
        \left( \begin{array}{@{\,}c@{\,}c@{\,}} 0_n & T(q) \\ -T(q)^\top &  -D(q,\eta) \end{array} \right)\!
    }_{J(x)}
    \begin{pmatrix} \nabla_{q} H(\eta) \\ \nabla_{\eta} H(\eta) \end{pmatrix} +
    \left( \begin{array}{@{\,}c@{}}
        0_n \\ G(q)
    \end{array} \right) u, \label{Square_PH_sys}
\end{align}
where $G(q) = T(q)^\top G_0(q)$ and $D(q, \eta)$ is a matrix satisfying $D(q,\eta) + D(q,\eta)^\top \succeq 0$.
Note that $D(q, \eta)$ consists of a gyroscopic term and a damping term.
Applying such a coordinate transformation eliminates the $M(q)^{-1}$ term in the Hamiltonian function $H_0(q,p)$, and then the new Hamiltonian function $H(\eta)$ becomes independent of $q$.
Thus, an appropriate modification of the upper left block of the structure matrix $J(x)$ makes it possible to choose a potential function that depends on both configuration $q$ and momentum $\eta$ without changing the kinematics.
\subsection{Sliding mode control}
Sliding mode control is a nonlinear control method and it belongs to a variable structure control.
See, e.g., \cite{Slotine1991,Ferrara2019} for details.
In the sliding mode control, the state of the plant system goes from the initial point towards a subspace of the state space called a sliding surface, which is called reaching mode.
After reaching the sliding surface, the state variable evolves along the desired dynamics on the surface, which is called sliding mode.
The control input is designed so that the state reaches the sliding surface in a finite time and stays there.
It often employs a discontinuous high gain input so that the state variable is constrained to the sliding surface.

Here, we consider a general input-affine nonlinear system 
\begin{align}
    \dot{x} = f(x) + g(x)u \label{BSMC_Eq}
\end{align}
with $x\in\mathbb{R}^n$ and $u\in\mathbb{R}^m$.
The symbol $\sigma(x)\in \mathbb{R}^m$ denotes a sliding variable.
The sliding surface is given by $\sigma(x)=0$, which is designed to achieve the desired dynamics in the sliding mode.
In many cases, sliding mode controllers consist of discontinuous functions so that the closed-loop systems include the following dynamics
\begin{align}
    \dot{\sigma}_i = -\lambda_i \,\mathrm{sgn}\, \sigma_i, \quad i = 1,2,\dots, m, \label{BSMC_Dynamics_sigma}
\end{align}
where $\lambda_i$'s are positive constants.
Here, $\mathrm{sgn}(\cdot)$ is the signum function defined by
\begin{align*}
    \mathrm{sgn}\, z \begin{cases} =1 & (z>0) \\ \in[-1,1] & (z=0) \\ =-1 & (z<0) \end{cases}.
\end{align*}
If its argument is a vector $x = (x_1, \dots , x_n)^\top \in\mathbb{R}^n$, then
\begin{align*}
    \mathrm{sgn}\,x = (\mathrm{sgn}\,x_1, \dots, \mathrm{sgn}\,x_n)^\top.
\end{align*}
By using such input, we can ensure that the sliding variable $\sigma(x)$ converges to zero in a finite time.
Then the state variable evolves along the desired dynamics on the sliding surface $\sigma(x) = 0$.

\section{Stabilization of passivity-based sliding mode control}
This section gives the main result of the paper.
A novel sliding mode controller for port-Hamiltonian systems is proposed that ensures Lyapunov stability.

\subsection{Generalized kinetic potential energy shaping}
In this subsection, we show one of the main results of this paper that the generalized version of KPES is proposed.
The resulting closed-loop port-Hamiltonian systems have special potential functions depending nonlinearly on both configuration $q$ and momentum $\eta$.
The following lemma gives such closed-loop systems.

\begin{lemma}\label{lem_genKPES}
Consider the system \eqref{Square_PH_sys} with any function $U:\mathbb{R}^m\to \mathbb{R}$, any smooth vector function $\phi:\mathbb{R}^m \times \mathbb{R}^m \to \mathbb{R}^m$ satisfying $\phi(0,0)=0$, and any matrix $D_\mathrm{d}(q,\eta)\in \mathbb{R}^{m\times m}$.
Suppose that the partial derivatives of $\phi(q,\eta)$ with respect to $q$ and $\eta$ are nonsingular, that is, there exists
\begin{align*}
    \frac{\partial \phi(q,\eta)}{\partial q}^{-1}, \frac{\partial \phi(q,\eta)}{\partial \eta}^{-1}
\end{align*}
for all $q$ and $\eta$.
Then, the feedback
\begin{align}
        u &= -G(q)^{-1}\Bigg\{(D_\mathrm{d}(q,\eta) - D(q,\eta))\eta + \left( T(q)^\top \frac{\partial \phi(q,\eta)}{\partial q}^\top  +  D_\mathrm{d}(q,\eta) \frac{\partial \phi(q,\eta)}{\partial \eta}^\top\right) \nabla_{\sigma} U(\sigma) \Biggr\} \label{General_KPES_input}
\end{align}
converts \eqref{Square_PH_sys} into the closed-loop Hamiltonian system 
\begin{align}
    &\begin{pmatrix} \dot{q} \\ \dot{\eta} \end{pmatrix} = 
    \underbrace{\left(\begin{array}{@{\,}c@{\,}c@{\,}}
        -T(q) \frac{\partial \sigma(q,\eta)}{\partial \eta}^{\top}\frac{\partial \sigma(q,\eta)}{\partial q}^{-\top} & T(q) \\ -T(q)^\top & -D_{\mathrm{d}}(q,\eta)
    \end{array}\right)}_{J_{\mathrm{d}}(x)}
    \begin{pmatrix} \nabla_{q} H_\mathrm{d} \\ \nabla_{\eta} H_\mathrm{d}\end{pmatrix},  \notag \\
    & H_\mathrm{d} (q,\eta) = \frac{1}{2}\|\eta\|^2 + U(\phi(q,\eta)), \label{General_KPES_sys}
\end{align}
where $\sigma \equiv \phi(q,\eta)$.
Furthermore, if the function $U$ is positive definite and smooth, and if 
\begin{align}
    &\frac{\partial \phi(q,\eta)}{\partial q}T(q)\frac{\partial \phi(q,\eta)}{\partial \eta}^\top + \frac{\partial \phi(q,\eta)}{\partial \eta}T(q)^\top \frac{\partial \phi(q,\eta)}{\partial q}^\top \succ 0 \label{condition_thm1_posLam},\\
    &D_\mathrm{d}(q,\eta) + D_\mathrm{d}(q,\eta)^\top \succ 0 \label{condition_thm1_posDd}
\end{align}
hold, then the origin of the transformed system \eqref{General_KPES_sys} is asymptotically stable with the Lyapunov function $H_\mathrm{d}(q,\eta)$.
\end{lemma}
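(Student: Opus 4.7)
The plan is to split the lemma into its two assertions and handle them in order: first, to verify by direct substitution that the feedback \eqref{General_KPES_input} converts \eqref{Square_PH_sys} into the closed-loop dynamics \eqref{General_KPES_sys}; second, to establish asymptotic stability using $H_\mathrm{d}$ as the Lyapunov function.

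For the first part, I would begin by computing the gradients
\begin{align*}
\nabla_q H_\mathrm{d} = \frac{\partial \phi(q,\eta)}{\partial q}^{\top}\nabla_\sigma U, \qquad \nabla_\eta H_\mathrm{d} = \eta + \frac{\partial \phi(q,\eta)}{\partial \eta}^{\top}\nabla_\sigma U.
\end{align*}
Substituting these into the $\dot q$-block of \eqref{General_KPES_sys}, the factor $\frac{\partial \phi}{\partial q}^{-\top}\frac{\partial \phi}{\partial q}^{\top}$ cancels and the expression collapses to $\dot q = T(q)\eta$, which matches the kinematic equation of \eqref{Square_PH_sys} automatically and independently of the control $u$. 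Equating the $\dot\eta$-block of \eqref{General_KPES_sys} with the $\dot\eta$-equation of \eqref{Square_PH_sys} (noting that $\nabla_q H = 0$ since $H$ depends only on $\eta$) and using the invertibility of $G(q)$ to solve for $u$ should reproduce exactly \eqref{General_KPES_input}.

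For the stability claim, I would compute $\dot H_\mathrm{d} = (\nabla H_\mathrm{d})^{\top} J_\mathrm{d}(x) (\nabla H_\mathrm{d})$. The off-diagonal pair $T,\,-T^{\top}$ is skew-symmetric, so its contribution to the quadratic form vanishes. The lower-right block yields $-\tfrac{1}{2}(\nabla_\eta H_\mathrm{d})^{\top}(D_\mathrm{d}+D_\mathrm{d}^{\top})(\nabla_\eta H_\mathrm{d})$, which is non-positive by \eqref{condition_thm1_posDd}. Plugging the expression for $\nabla_q H_\mathrm{d}$ into the upper-left block and again cancelling $\frac{\partial \phi}{\partial q}^{-\top}\frac{\partial \phi}{\partial q}^{\top}$ gives a quadratic form in $\nabla_\sigma U$ whose symmetrization is precisely the matrix of \eqref{condition_thm1_posLam}, hence also non-positive. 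Consequently $\dot H_\mathrm{d}\le 0$, with equality only if both $\nabla_\sigma U = 0$ and $\nabla_\eta H_\mathrm{d} = 0$.

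The main obstacle I anticipate lies in promoting $\dot H_\mathrm{d}\le 0$ to asymptotic stability, since $\dot H_\mathrm{d}$ is only negative semidefinite in $(q,\eta)$. My plan is to invoke LaSalle's invariance principle: on the set $\{\dot H_\mathrm{d}=0\}$, positive definiteness of $U$ combined with $\nabla_\sigma U = 0$ forces $\sigma = \phi(q,\eta) = 0$; the condition $\nabla_\eta H_\mathrm{d}=0$ then reduces to $\eta = 0$; finally $\phi(q,0)=0$ together with $\phi(0,0)=0$ and the non-singularity of $\partial\phi/\partial q$ forces $q=0$ by the inverse function theorem. The same non-singularity argument also yields local positive definiteness of $H_\mathrm{d}$. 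I expect the delicate point to be making this injectivity-in-$q$ step rigorous and clarifying that the resulting stability assertion is, in general, only local.
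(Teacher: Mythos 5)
Your proposal follows essentially the same route as the paper: a direct substitution using $\nabla_q H_\mathrm{d} = \frac{\partial\phi}{\partial q}^{\top}\nabla_\sigma U$ and $\nabla_\eta H_\mathrm{d} = \eta + \frac{\partial\phi}{\partial\eta}^{\top}\nabla_\sigma U$ for the first claim, and the symmetrization $\dot H_\mathrm{d} = \tfrac{1}{2}\nabla_x H_\mathrm{d}^{\top}(J_\mathrm{d}+J_\mathrm{d}^{\top})\nabla_x H_\mathrm{d}$ with the block-diagonal negative definite $J_\mathrm{d}+J_\mathrm{d}^{\top}$ for the second. Your closing LaSalle/invariance step is actually more careful than the paper's proof, which simply asserts $H_\mathrm{d}\succ 0$ and $\dot H_\mathrm{d}\prec 0$ without discussing the set where $\nabla_x H_\mathrm{d}$ vanishes or the local injectivity of $q\mapsto\phi(q,0)$; your handling of those points is correct and fills a genuine, if minor, gap.
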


\begin{proof}
Firstly, let us prove the former part of Lemma~\ref{lem_genKPES}.
The partial derivatives of $U(q,\eta)$ with respect to $q$ and $\eta$ are calculated as
\begin{align}
    \frac{\partial U(\phi(q,\eta))}{\partial q} = \frac{\partial U(\sigma)}{\partial \sigma}
    \frac{\partial \phi(q,\eta)}{\partial q}, \label{rnd_U_rnd_q}\\
    \frac{\partial U(\phi(q,\eta))}{\partial \eta} = \frac{\partial U(\sigma)}{\partial \sigma}
    \frac{\partial \phi(q,\eta)}{\partial \eta} \label{rnd_U_rnd_p}.
\end{align}
Substituting \eqref{General_KPES_input} into \eqref{Square_PH_sys} and using \eqref{rnd_U_rnd_q} and \eqref{rnd_U_rnd_p}, we obtain
\begin{align}
    \dot{q} &= T(q)\eta \notag \\
    &= -T(q)\frac{\partial \phi(q,\eta)}{\partial \eta}^\top\frac{\partial \phi(q,\eta)}{\partial q}^{-\top} \nabla_q H_{\mathrm{d}} + T(q)\nabla_p H_{\mathrm{d}} \label{dot_Q_lemm1}\\ 
    \dot{p} &= -D(q,\eta)\eta + G(q)u \notag \\
    &= -T(q)^\top \frac{\partial \phi(q,\eta)}{\partial q}^\top \nabla_{\sigma}U(\sigma) -D_{\mathrm{d}}(q,\eta)\left(\frac{\partial \phi(q,\eta)}{\partial \eta}^\top\nabla_{\sigma}U(\sigma)  + \eta\right) \notag \\ 
    &= -T(q)^\top \nabla_q H_{\mathrm{d}}(q,\eta) -D_{\mathrm{d}}(q,\eta)\nabla_\eta H_{\mathrm{d}}(q,\eta) \label{dot_P_lemm1}.
\end{align}
Equations \eqref{dot_Q_lemm1} and \eqref{dot_P_lemm1} agree with the dynamics of \eqref{General_KPES_sys}.
Thus, the input \eqref{General_KPES_input} converts the system \eqref{Square_PH_sys} into the closed-loop system \eqref{General_KPES_sys}.

Next, we consider the latter part of the lemma.
Let us consider the $H_{\mathrm{d}}(x)$ is a Lyapunov function candidate.
The structure matrix $J_{\mathrm{d}}(x)$ satisfies
\begin{align}
    J_{\mathrm{d}}(x) + J_{\mathrm{d}}(x)^\top  
    &= -\mathrm{diag} \Biggl( 
    \frac{\partial \phi}{\partial q}^{-1}\Biggl( 
    \frac{\partial \phi}{\partial q}T\frac{\partial \phi}{\partial \eta}^\top
    + 
    \frac{\partial \phi}{\partial \eta}T^\top \frac{\partial \phi}{\partial q}^\top
    \Biggr)
    \frac{\partial \phi}{\partial q}^{-\top}, D_\mathrm{d} + D_\mathrm{d}\Biggr) \prec 0,
    \label{negative_Jd} 
\end{align}
due to the assumptions \eqref{condition_thm1_posLam} and \eqref{condition_thm1_posDd}.
Then it follows from the positive definiteness of $U$ and \eqref{negative_Jd} that
\begin{align*}
    H_{\mathrm{d}} &\succ 0, \\
    \frac{\mathrm{d}H_{\mathrm{d}}}{\mathrm{d}t} 
    &= \frac{\partial H_{\mathrm{d}}}{\partial x}\frac{\mathrm{d}x}{\mathrm{d}t}
     = \nabla_x H_{\mathrm{d}}(x)^\top J_{\mathrm{d}}(x)\nabla_x H_{\mathrm{d}}(x)\\
    &= \frac{1}{2}\nabla_x H_{\mathrm{d}}(x)^\top ( J_{\mathrm{d}}(x) + J_{\mathrm{d}}(x)^\top )\nabla_x H_{\mathrm{d}}(x)\\
    &\prec 0.
\end{align*}
Therefore, the origin of the closed-loop system \eqref{General_KPES_sys} is asymptotically stable with the Lyapunov function $H_{\mathrm{d}}$.
This completes the proof. 
\end{proof}

This lemma shows that the closed-loop port-Hamiltonian system with an artificial potential function $U(\phi(q,\eta))$ is obtained by the feedback input \eqref{General_KPES_input}.
Since the functions $U$ and $\phi$ are free parameters, we can construct various stabilizing controllers with Lyapunov stability by adjusting them.
Such freedom will be used to realize sliding mode in the next subsection.

\subsection{Passivity-based sliding mode control}
In the previous subsection, we obtain the closed-loop port-Hamiltonian systems with free parameters $\phi$ and $U$ for selecting a Lyapunov function candidate.
This subsection shows that by selecting these parameters appropriately the resulting controller works as a sliding mode controller with Lyapunov stability.
Here, let us define the sliding variable by
\begin{align}
    \sigma = \phi(q,\eta) . \label{Definition_sigma}
\end{align}
To derive the proposed passivity-based sliding mode controller, let us consider the following assumptions.
\begin{assumption}\label{asm_positive_Lambda}
    The symmetric matrix $\Lambda(q,\eta)$ defined by 
    \begin{align}
        \Lambda(q,\eta) &\equiv \frac{\partial \phi(q,\eta)}{\partial q}T(q)\frac{\partial \phi(q,\eta)}{\partial \eta}^\top  + \frac{\partial \phi(q,\eta)}{\partial \eta} T(q)^\top \frac{\partial \phi(q,\eta)}{\partial q}^\top \label{DefLambda}
    \end{align}
    is uniformly positive definite, that is, there exists a constant $\varepsilon > 0$ such that 
    \begin{align}
        \Lambda(q,\eta) \succ \varepsilon I_m \succ 0, \ \forall q, \eta.
    \end{align}
\end{assumption}
\begin{assumption}\label{asm_U}
    Let $U:\mathbb{R}^m \to \mathbb{R}$ be a positive definite function.
    This function satisfies
    \begin{align}
        \|\nabla_\sigma U(\sigma)\| \geq c\, U(\sigma)^\rho
    \end{align}
    except $\sigma=0$, where $c$ is a positive constant and $\rho$ is a constant satisfying $0 \leq \rho < 1/2$.
\end{assumption}

The following theorem gives a new passivity-based sliding mode controller.
\begin{theorem}\label{thm_stabilization}
    Consider the system \eqref{Square_PH_sys} with any function $U: \mathbb{R}^m \to \mathbb{R}$ and a vector function $\phi: \mathbb{R}^m \times \mathbb{R}^m \to \mathbb{R}^m$ satisfying $\phi(0,0)=0$.
    Suppose that the partial derivatives of $\partial \phi(q,\eta)/\partial q$ and $\partial \phi(q,\eta)/\partial \eta$ of $\phi(q,\eta)$ are nonsingular for any $q$ and $\eta$.
    Then the feedback input 
    \begin{align}
        u &= - G(q)^{-1}\frac{\partial \phi(q,\eta)}{\partial \eta}^{-1}\Lambda(q,\eta)\nabla_\sigma U(\sigma) + G(q)^{-1}\left(D(q,\eta)-\frac{\partial \phi(q,\eta)}{\partial \eta}^{-1}\frac{\partial \phi(q,\eta)}{\partial q}T(q)\right) \eta \label{feedback_genPBSMC}
    \end{align}
    converts the system \eqref{Square_PH_sys} into the following closed-loop system
    \begin{align}
            &\begin{pmatrix} \dot{q} \\ \dot{\eta} \end{pmatrix} = 
            \begin{pmatrix} -T\frac{\partial \phi}{\partial \eta}^\top\frac{\partial \phi}{\partial q}^{-\top} & T \\ -T^\top & -\frac{\partial \phi}{\partial \eta}^{-1}\frac{\partial \phi}{\partial q}T \end{pmatrix}
            \begin{pmatrix} \nabla_q H_{\mathrm{smc}} \\ \nabla_\eta H_{\mathrm{smc}} \end{pmatrix}, \notag\\
            &H_{\mathrm{smc}}(q,\eta) = \frac{1}{2}\|\eta\|^2 + U(\phi(q,\eta)). \label{genPBSMC_sys}
    \end{align}
    Furthermore, if Assumptions \ref{asm_positive_Lambda} and \ref{asm_U} hold, then the sliding variable $\sigma \equiv \phi(q,\eta)$ is enforced to converge to zero after a finite time transient, and the origin of the closed-loop system \eqref{genPBSMC_sys} is asymptotically stable with the Lyapunov function $H_{\mathrm{smc}}$.
\end{theorem}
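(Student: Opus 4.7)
The plan is to treat the first half of the theorem as a corollary of Lemma~\ref{lem_genKPES} obtained through a judicious choice of the free damping matrix $D_{\mathrm{d}}$, and then to extract finite-time reaching of the sliding surface from a scalar differential inequality satisfied by $U(\sigma)$ along the closed-loop trajectories. Asymptotic stability of the origin then comes for free once the hypotheses of Lemma~\ref{lem_genKPES} are checked.

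For the first step I would set
\begin{align*}
D_{\mathrm{d}}(q,\eta) = \frac{\partial \phi(q,\eta)}{\partial \eta}^{-1}\frac{\partial \phi(q,\eta)}{\partial q}T(q)
\end{align*}
in the Lemma~\ref{lem_genKPES} feedback~\eqref{General_KPES_input}. A short algebraic simplification shows $T^\top (\partial \phi/\partial q)^\top + D_{\mathrm{d}}(\partial \phi/\partial \eta)^\top = (\partial \phi/\partial \eta)^{-1}\Lambda(q,\eta)$, so \eqref{General_KPES_input} collapses exactly to the proposed controller~\eqref{feedback_genPBSMC}, and the closed-loop form~\eqref{General_KPES_sys} becomes~\eqref{genPBSMC_sys}. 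Moreover $D_{\mathrm{d}} + D_{\mathrm{d}}^\top = (\partial \phi/\partial \eta)^{-1}\Lambda (\partial \phi/\partial \eta)^{-\top}$, so Assumption~\ref{asm_positive_Lambda} simultaneously implies both~\eqref{condition_thm1_posLam} and~\eqref{condition_thm1_posDd}. Combined with the positive definiteness of $U$ from Assumption~\ref{asm_U}, Lemma~\ref{lem_genKPES} then yields asymptotic stability of the origin with Lyapunov function $H_{\mathrm{smc}}$.

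The substantive new ingredient is the finite-time reaching of $\sigma = 0$. I would compute $\dot{\sigma} = (\partial \phi/\partial q)\,\dot{q} + (\partial \phi/\partial \eta)\,\dot{\eta}$ directly from~\eqref{genPBSMC_sys}, using $\nabla_q H_{\mathrm{smc}} = (\partial \phi/\partial q)^\top \nabla_\sigma U$ and $\nabla_\eta H_{\mathrm{smc}} = \eta + (\partial \phi/\partial \eta)^\top \nabla_\sigma U$; after cancellation of the $T\eta$ and $(\partial \phi/\partial \eta)^\top \nabla_\sigma U$ contributions only the symmetric pair of cross-terms survives, giving the clean identity $\dot{\sigma} = -\Lambda(q,\eta)\,\nabla_\sigma U(\sigma)$. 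Setting $V = U(\sigma)$ and differentiating along trajectories, I obtain $\dot{V} = -\nabla_\sigma U^\top \Lambda \nabla_\sigma U \leq -\varepsilon \|\nabla_\sigma U\|^{2} \leq -\varepsilon c^{2} V^{2\rho}$ by Assumptions~\ref{asm_positive_Lambda} and~\ref{asm_U}.

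The main (and really the only delicate) step is then to promote the inequality $\dot{V} \leq -\varepsilon c^{2} V^{2\rho}$ with $0\leq 2\rho<1$ to finite-time reaching. This is the standard finite-time stability argument: separating variables and integrating gives $V(t)^{1-2\rho} \leq V(0)^{1-2\rho} - \varepsilon c^{2}(1-2\rho)\,t$, so $V$, and hence $\sigma$, reaches zero no later than $t^{*} = V(0)^{1-2\rho}/(\varepsilon c^{2}(1-2\rho))$. One point worth a word is that Assumption~\ref{asm_U} only postulates the growth bound for $\sigma \neq 0$, which matches exactly the region where the estimate for $\dot V$ is needed; once $\sigma = 0$ is attained, the asymptotic stability already obtained via Lemma~\ref{lem_genKPES} keeps the state at the origin, completing the proof.
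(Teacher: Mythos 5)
Your reduction of the first claim to Lemma~\ref{lem_genKPES} via the choice $D_{\mathrm{d}}=\frac{\partial \phi}{\partial \eta}^{-1}\frac{\partial \phi}{\partial q}T$ is correct and is a cleaner route than the paper's unexplained ``direct calculation''; the identities $T^\top\frac{\partial\phi}{\partial q}^\top+D_{\mathrm{d}}\frac{\partial\phi}{\partial\eta}^\top=\frac{\partial\phi}{\partial\eta}^{-1}\Lambda$ and $D_{\mathrm{d}}+D_{\mathrm{d}}^\top=\frac{\partial\phi}{\partial\eta}^{-1}\Lambda\frac{\partial\phi}{\partial\eta}^{-\top}$ both check out. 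The derivation of $\dot\sigma=-\Lambda\nabla_\sigma U$ also matches the paper. The problems begin after that.

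First, you cannot import asymptotic stability from the latter part of Lemma~\ref{lem_genKPES}: that part assumes $U$ is \emph{smooth}, whereas Assumption~\ref{asm_U} with $\rho<1/2$ forces $U$ to be non-smooth at $\sigma=0$ (a smooth positive definite $U$ has $\|\nabla_\sigma U\|=O(U^{1/2})$ near its minimum, which violates $\|\nabla_\sigma U\|\geq cU^{\rho}$). This non-smoothness is the whole mechanism of sliding mode here, so the stability part genuinely has to be redone; the paper does it by checking $\dot H_{\mathrm{smc}}<0$ separately in the reaching mode, in each sub-sliding mode, and on the sliding surface $\sigma=0$ (where the state is not yet at the origin and the reduced dynamics $\dot H_{\mathrm{smc}}=-\frac12\|\frac{\partial\phi}{\partial\eta}^{-\top}\eta\|_\Lambda^2$ must be shown to drive $\eta$, hence $q$, to zero). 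Your closing sentence conflates reaching $\sigma=0$ with reaching the origin. Second, the finite-time estimate $\dot U\leq-\varepsilon c^2U^{2\rho}$ is only justified while $U$ is differentiable along the trajectory. For the canonical choices such as $U=\alpha\|\sigma\|_1$, some components $\sigma_i$ hit zero before the others and the system enters a sub-sliding mode: there $\dot\sigma_{\mathrm{sm}}=0$, the relevant components of $\nabla_\sigma U$ are equivalent-control selections from a subdifferential, and the decay of $U$ is governed only by $\|\nabla_{\sigma_{\mathrm{rm}}}U\|$, which Assumption~\ref{asm_U} bounds only indirectly through $\|\nabla_\sigma U\|\leq l_{\max}^k\|\nabla_{\sigma_{\mathrm{rm}}}U\|$. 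This is precisely why the paper introduces the Schur complement $\Lambda/\Lambda_{11}$ and the degraded constant $a=\max_k l_{\max}^k\geq1$ in \eqref{inequarity_dot_U}. Skipping this case leaves the finite-time claim unproved on exactly the portion of the trajectory that distinguishes the multivariable result from the scalar one.
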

\begin{proof}
    Since the partial derivatives of $U(q,\eta)$ with respect to $q$ and $\eta$ are calculated as \eqref{rnd_U_rnd_q} and \eqref{rnd_U_rnd_p}, it is proved by a direct calculation that the port-Hamiltonian system \eqref{Square_PH_sys} is transformed into the closed-loop system \eqref{genPBSMC_sys} by the input \eqref{feedback_genPBSMC}.
    This proves the first part of this theorem.
    
    Next, to prove finite time convergence of the sliding variable $\sigma$, let us apply the following coordinate transformation
    \begin{align}
        \begin{pmatrix} q \\ \eta \end{pmatrix} \mapsto 
        \begin{pmatrix} \sigma \\ \eta \end{pmatrix} = 
        \begin{pmatrix} \phi(q,\eta) \\ \eta \end{pmatrix} \label{coordinate_trans_to_sigma}.
    \end{align}
    Then the closed-loop system is represented as follows:
    \begin{align*}
        \begin{pmatrix} \dot{\sigma} \\ \dot{\eta} \end{pmatrix} &= 
        \begin{pmatrix} -\Lambda & 0 \\ -
        \frac{\partial \phi}{\partial \eta}^{-1} \Lambda & -\frac{\partial \phi}{\partial \eta}^{-1}\frac{\partial \phi}{\partial q}T \end{pmatrix}
        \begin{pmatrix} \nabla_{\sigma} H_{\mathrm{smc}} \\ \nabla_{\eta} H_{\mathrm{smc}} \end{pmatrix}, \notag \\
        H_{\mathrm{smc}} &= \frac{1}{2}\|\eta\|^2 + U(\sigma).
    \end{align*}
    We can see that the dynamics of sliding variable $\sigma$ is 
    \begin{align}
        \dot{\sigma} = -\Lambda(q,\eta)\nabla_{\sigma} U(\sigma) \label{DotSigma}.
    \end{align}
    Now, let us consider $U(\sigma)$ as a Lyapunov function candidate and prove that there exists a constant $a\geq 1$ satisfying
    \begin{align}
        \dot{U} & \leq - \frac{\varepsilon c^2}{a^2}U^{2\rho} \label{inequarity_dot_U}
    \end{align}
    along the closed loop system \eqref{genPBSMC_sys}.
    Equation \eqref{inequarity_dot_U} is proved in multiple cases such as reaching mode and sliding modes. 
    First of all, for the reaching mode, i.e., $\sigma_i \neq 0$ for $\forall i$, the following inequality holds from Assumptions~\ref{asm_positive_Lambda} and \ref{asm_U}
    \begin{align}
        \dot{U} &= \nabla_\sigma U(\sigma)^\top (-\Lambda(q,\eta)\nabla_\sigma U(\sigma)) \notag \\
                &\leq -\varepsilon\|\nabla_\sigma U(\sigma)\|^2 \notag \\
                &\leq -\varepsilon c^2 U(\sigma)^{2\rho} \label{inequarity_dot_U_rm}.
    \end{align}
    Thus, \eqref{inequarity_dot_U} holds with $a=a_0=1$ .
    Next for the sliding modes, i.e, there exists some $i$'s for which $\sigma_i=0$.
    Suppose there are $k$ sub-sliding modes and $m-k$ sub-reaching modes, i.e., $k$ elements of sliding variable vector $\sigma$ are enforced to be zero and $m-k$ elements of $\sigma$ are not yet zero for $1\leq k\leq m-1$.
    Let us denote $\sigma_{\mathrm{sm}} \equiv (\sigma_1, \dots, \sigma_k)^\top, \sigma_{\mathrm{rm}} \equiv (\sigma_{k+1}, \dots, \sigma_m)^\top$ and suppose $\sigma_{\mathrm{sm}} = \dot{\sigma}_{\mathrm{sm}} = 0$ (sliding mode) and $\sigma_{\mathrm{rm}} \neq 0$ (reaching mode) for simplicity.
    Then the dynamics of sliding variable \eqref{DotSigma} is represented as
    \begin{align}
        \frac{\mathrm{d}}{\mathrm{d}t}
        \begin{pmatrix}
            \sigma_{\mathrm{sm}} \\ \sigma_{\mathrm{rm}} 
        \end{pmatrix} &= -
        \begin{pmatrix}
            \Lambda_{11}(q,\eta) & \Lambda_{12}(q,\eta) \\ 
            \Lambda_{12}(q,\eta)^\top & \Lambda_{22}(q,\eta)
        \end{pmatrix}
        \begin{pmatrix}
            \nabla_{\sigma_{\mathrm{sm}}} U(\sigma) \\ \nabla_{\sigma_{\mathrm{rm}}} U(\sigma)
        \end{pmatrix} \notag \\ &= 
        \begin{pmatrix} 0 \\ \ast \end{pmatrix},\label{equivalent_system}
    \end{align}
    where $\ast$ is an arbitrary value.
    It holds from \eqref{equivalent_system} and Assumption~\ref{asm_U} that
    \begin{align}
        c U(\sigma)^\rho &\leq \|\nabla_\sigma U(\sigma) \| \notag \\ &= 
        \Biggl\| \underbrace{\begin{pmatrix} -\Lambda_{11}(q,\eta)^{-1}\Lambda_{12}(q,\eta) \\ I_{m-k}  \end{pmatrix}}_{L^k(q,\eta)} \nabla_{\sigma_{\mathrm{rm}}} U(\sigma)\Biggr\| \notag \\
        &= \|L^k(q,\eta)\|\|\nabla_{\sigma_{\mathrm{rm}}} U(\sigma)\| \notag \\
        &\leq l^{k}_{\max} \|\nabla_{\sigma_{\mathrm{rm}}} U(\sigma)\|, \label{relation_nabla_sigma_rm_U}
    \end{align}
    where $l^k_{\max} \geq 1$ is an upper bound of $\|L^{k}(q,\eta)\|$ in a neighborhood of the origin.
    The equivalent sliding mode control system is described as
    \begin{align}
        \dot{\sigma}_{\mathrm{rm}} &= 
        -(\underbrace{\Lambda_{22} - \Lambda_{12}^\top \Lambda_{11}^{-1} \Lambda_{12}}_{\Lambda/\Lambda_{11}(q,\eta)})
        \nabla_{\sigma_{\mathrm{rm}}} U(\sigma) \label{equivalent_sigma_rm}.
    \end{align}
    Calculating the time derivative of $U(\sigma)$ with \eqref{relation_nabla_sigma_rm_U} and \eqref{equivalent_sigma_rm}, we obtain
    \begin{align}
        \dot{U} &= \left( \frac{\partial U(\sigma)}{\partial \sigma_{\mathrm{sm}}}, \frac{\partial U(\sigma)}{\partial \sigma_{\mathrm{rm}}} \right)
        \begin{pmatrix} \dot{\sigma}_{\mathrm{sm}} \\ \dot{\sigma}_{\mathrm{rm}}\end{pmatrix} \notag \\
        &= -\nabla_{\sigma_{\mathrm{rm}}}U^\top (\Lambda/\Lambda_{11}) \nabla_{\sigma_{\mathrm{rm}}}U \notag \\
        &\leq -\varepsilon \|\nabla_{\sigma_{\mathrm{rm}}}U(\sigma)\|^2 \leq - \frac{\varepsilon c^2}{(l^k_{\max})^2} U(\sigma)^{2\rho}, \label{inequarity_dot_U_subsm}
    \end{align}
    where we use the fact that $\Lambda/\Lambda_{11}(q,\eta) \succ \varepsilon I_{m-k}$ which is derived from Assumption~\ref{asm_positive_Lambda} and the decomposition of $\Lambda(q,\eta)$ as follows:
    \begin{align*}
        \Lambda = 
        \begin{pmatrix} I_k & 0 \\ \Lambda_{12}^\top \Lambda_{11}^{-1}& I_{m-k} \end{pmatrix}
        \begin{pmatrix} \Lambda_{11} & 0 \\ 0 & \Lambda/\Lambda_{11} \end{pmatrix}
        \begin{pmatrix} I_k & \Lambda_{11}^{-1}\Lambda_{12}\\ 0 & I_{m-k} \end{pmatrix}.
    \end{align*} 
    Thus, the inequality \eqref{inequarity_dot_U} holds with $a=a_k = l_{\max}^k$ even if any sub-sliding mode occurs.
    Hence, it can be proved that \eqref{inequarity_dot_U} holds in both the reaching mode and any sub-sliding mode by redefining the maximum value of $a_k$'s in each case as $a = \max_k a_k$.
    By integrating \eqref{inequarity_dot_U} with respect to time, we obtain
    \begin{align}
        U(\sigma(t)) \leq \left(-\frac{\varepsilon c^2}{a^2} (t-t_0) + U(\sigma(t_0))^{1-2\rho}\right)^{\frac{1}{1-2\rho}}.
    \end{align}
    Then, $U(\sigma)$ becomes zero within a finite time $t_0 + a^2 U(\sigma(t_0))^{1-2\rho}/(\varepsilon c^2)$. 
    Therefore, the sliding variable $\sigma$ converges to zero in a finite time. 

    In the last part of the proof, we will prove asymptotic stability of the closed-loop system.
    The time derivative of the Hamilton function $H_{\mathrm{smc}}$ in the reaching mode $\sigma_i \neq 0$ is calculated as
    \begin{align*}
        \dot{H}_{\mathrm{smc}} = -\frac{1}{4}\left\|\frac{\partial \sigma}{\partial \eta}^{-\top}\eta+2\nabla_\sigma U\right\|^2_{\Lambda} - \frac{1}{4}\left\|\frac{\partial \sigma}{\partial \eta}^{-\top}\eta\right\|_{\Lambda}^2 \prec 0.
    \end{align*}
    In particular, if there are sub-sliding modes where
    \begin{align*}
        \dot{\sigma}_{\mathrm{sm}} = 0, \dot{\sigma}_{\mathrm{rm}} \neq 0, 
    \end{align*}
    holds, then it follows from \eqref{equivalent_system} that 
    \begin{align*}
        \dot{H}_{\mathrm{smc}} 
        &= -\frac{1}{4}\left\|\Lambda\,\frac{\partial \phi}{\partial \eta}^{-\top}\eta + 2\begin{pmatrix} 0 \\ \dot{\sigma}_\mathrm{rm} \end{pmatrix}\right\|^2_{\Lambda^{-1}} - \frac{1}{4}\left\|\frac{\partial \phi}{\partial \eta}^{-\top}\eta\right\|_{\Lambda}^2  \\
        &\prec 0,
    \end{align*}
    in the state space of the equivalent system where $\sigma_{\mathrm{sm}} = 0$.
    Moreover, it holds in the sliding mode $\sigma=0$ that
    \begin{align*}
        \dot{\sigma} = \Lambda(q,\eta)\nabla_\sigma U(\sigma) = 0.
    \end{align*}
    Then the time derivative of $H_{\mathrm{smc}}$ is calculated as
    \begin{align*}
        \dot{H}_{\mathrm{smc}} = - \frac{1}{2}\left\|\frac{\partial \phi}{\partial \eta}^{-\top}\eta\right\|_{\Lambda}^2 \prec 0
    \end{align*}
    on the sliding surface $\sigma = 0$.
    Thus, the time derivative of $H_{\mathrm{smc}}$ is negative in reaching mode, sub-sliding mode, and sliding mode.
    Therefore, the origin of the closed-loop system is asymptotically stable with the Lyapunov function $H_{\mathrm{smc}}$ for all cases.
    This completes the proof.
\end{proof}

This theorem shows that the proposed passivity-based sliding mode controller achieves sliding mode control and Lyapunov stability simultaneously.
It also realizes sliding mode control with arbitrary smoothness (arbitrary convergence speed), since a parameterization of stabilizing controllers consisting of both standard passivity-based controllers and sliding mode ones is obtained.
In addition to that, we can freely adjust the trade-off between robustness against external disturbances and chattering vibration because the input can be changed from discontinuous inputs to continuous ones as depicted in Fig~\ref{Variable_potential_and_input}.
The reason for these advantages is that two Lyapunov functions $H_{\mathrm{smc}}(q,\eta)$ and $U(\sigma)$ are used simultaneously to ensure stability.
The Hamiltonian function $H_\mathrm{smc}(q,\eta)$ guarantees Lyapunov stability of the whole system, while $U(\sigma)$ ensures finite time convergence of the sliding variable $\sigma$.
Examples of the potential function $U(\sigma)$ satisfying Assumption~\eqref{asm_U} are
\begin{align}
    U(\sigma) &= k\|\sigma\|_s^r,\ k>0,\ 1\leq r<2,\ 1\leq s, \label{norm-potential} \\
    U(\sigma) &= \alpha\|\sigma\|_1 + \frac{\beta}{2}\|\sigma\|^2,\ \alpha>0,\ \beta>0.
\end{align}
In particular, the potential function \eqref{norm-potential} reduces chattering phenomenon by selecting a large value of $r$ and improves the behavior of convergence of the sliding variable with an appropriate value of $s$. 
See \cite{Sakata2023} for details.

\begin{remark}\label{rmk_unnecessarity_of_diagonal}
    The prototype of a passivity-based sliding mode controller \cite{Fujimoto2021} has realized the multiple scalar sliding mode subsystems where each component of $\sigma$ achieves \eqref{BSMC_Dynamics_sigma}.
    So, it requires $\Lambda$ to be diagonal because the dynamics of the sliding variables $\sigma_i$'s are decoupled as in \eqref{BSMC_Dynamics_sigma}.
    However, this condition is difficult to be satisfied in general.
    To satisfy this condition, a nonlinear feedback is required to cancel the complex nonlinearity of the plant system just as feedback linearization.
    On the other hand, the proposed controller requires Assumption~\ref{asm_positive_Lambda} which is more easily to be satisfied.
    This relaxation makes it possible to apply the proposed passivity-based sliding mode controller to various problem settings and particularly to trajectory tracking control as presented in the next section.
    The proposed controller also has additional free parameters that can be used to adjust control performance such as alleviating the chattering.
    Moreover, canceling the nonlinearity of the system requires precise information about the system, so the proposed controller is expected that it is more robust against the modeling error than the previous one.
    The proposed controller is the generalized version of the prototype one.
\end{remark}
 
\begin{remark}\label{rmk_contribution_SMC}
    In most cases of multiple input multiple output (MIMO) systems, many sliding mode controllers are designed so that each element of sliding variables $\sigma_i$'s are decoupled as in \eqref{BSMC_Dynamics_sigma}.
    On the other hand, Theorem~\ref{thm_stabilization} shows that sliding mode control can be achieved if 
    \begin{align*}
        \dot{\sigma} = -A\,\mathrm{sgn}\,\sigma
    \end{align*}
    holds with a positive definite matrix $A\in\mathbb{R}^{m\times m}$.
    This implies that sliding mode is realized in the MIMO case even if the matrix $A$ is not diagonal.
    This result is useful in the design of usual MIMO sliding-mode controllers.
\end{remark}
\begin{figure}
    \centering
    \subfigure[Potential function $U(\sigma)$]{%
        \includegraphics[clip, width=0.4\columnwidth]{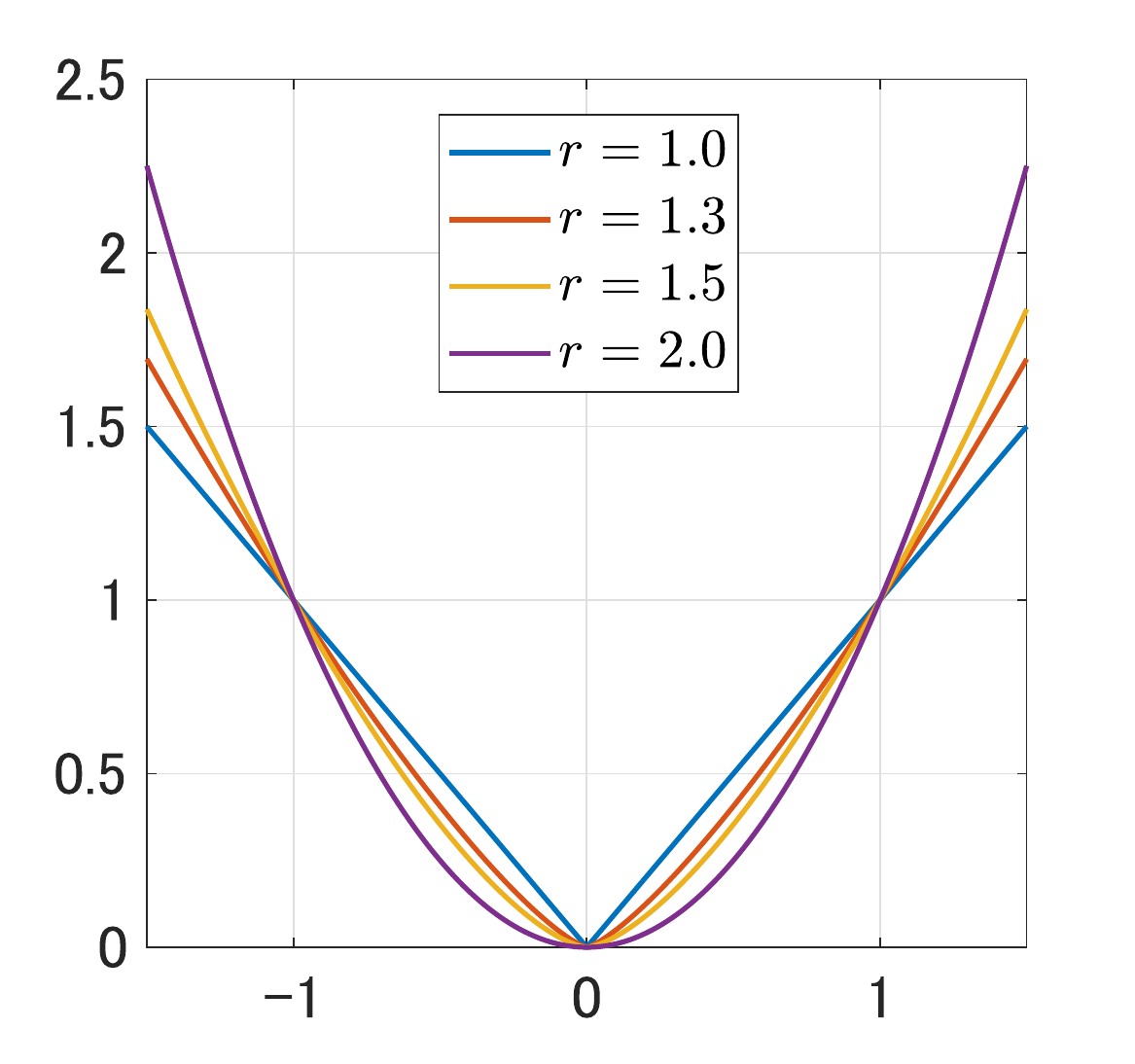}}%
    \subfigure[input $u$]{%
        \includegraphics[clip, width=0.4\columnwidth]{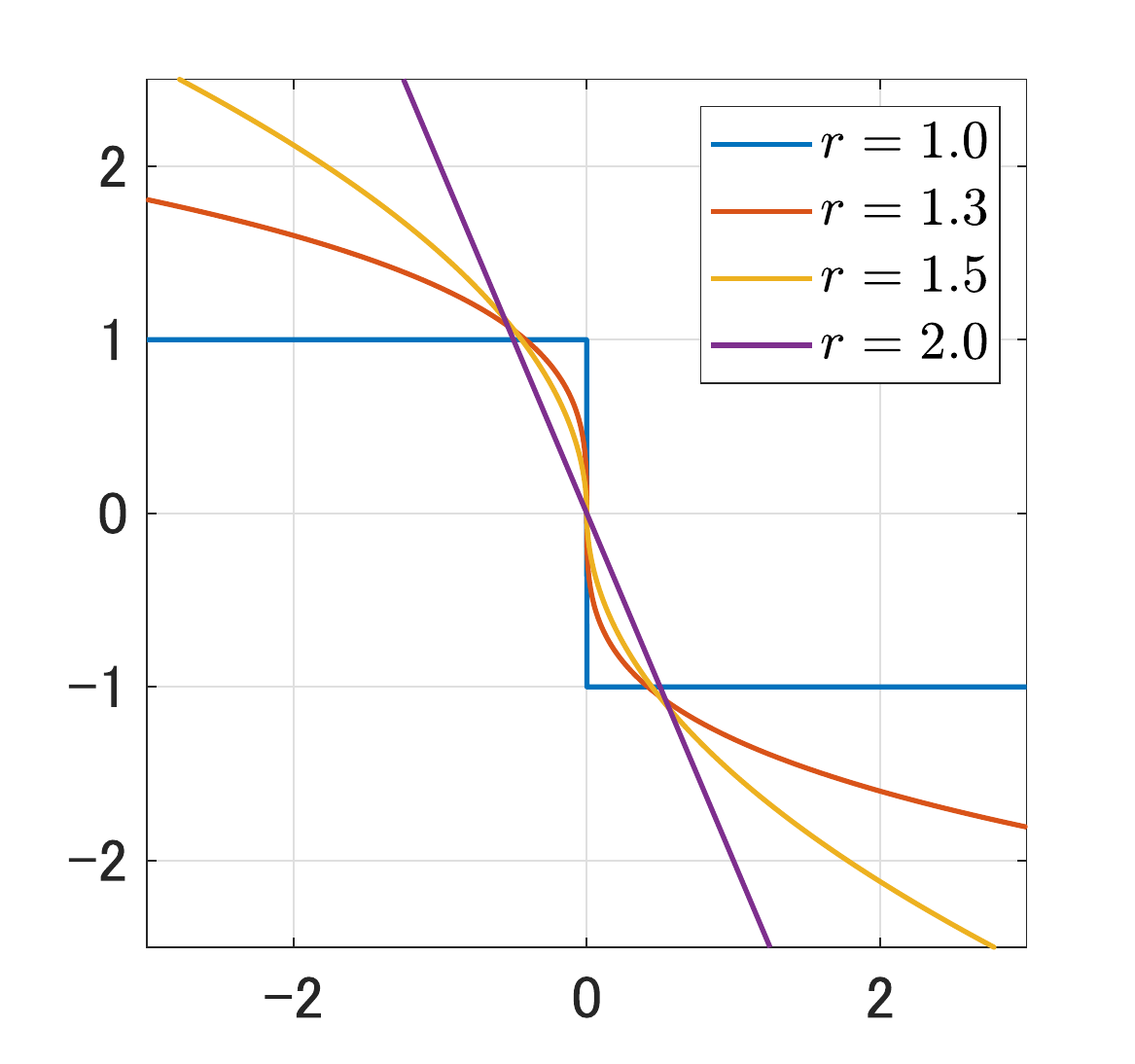}}%
    \caption{Potential function $U(\sigma) =\|\sigma\|^r$ and corresponding input $u = \nabla_\sigma U(\sigma)$.}
    \label{Variable_potential_and_input}
\end{figure}

The next section gives application of the proposed controllers to trajectory tracking control problems.

\section{Application to trajectory tracking control}
Trajectory tracking control for mechanical port-Hamiltonian systems has been studied, e.g., \cite{Fujimoto2003},\cite{Romero2015},\cite{Ferguson2019}.
In this section, we propose passivity-based sliding mode controllers for trajectory tracking control problems by combining these methods and the proposed PBSMC.

\subsection{Coordinate transformation to error system}
This subsection gives the closed-loop port-Hamiltonian system whose state variable is the tracking error between the state and its desired value.

For the port-Hamiltonian system \eqref{Square_PH_sys}, let us consider the desired trajectory $q^\mathrm{d} (t)$ of the configuration variable $q$ which is assumed to be bounded and twice differentiable.
Let us define the target momentum $\eta^\mathrm{d}(q,t)$ by
\begin{align*}
    \eta^\mathrm{d} (q,t) \equiv T(q)^{-1}\dot{q}^\mathrm{d} (t),
\end{align*} 
and define the error coordinate of the state variable by 
\begin{align}
    \begin{pmatrix} \tilde{q} \\ \tilde{\eta} \end{pmatrix} = 
    \begin{pmatrix} q -q^\mathrm{d}(t) \\ \eta - \eta^\mathrm{d}(q,t) \end{pmatrix}.
\end{align}
The following lemma shows the system \eqref{Square_PH_sys} is converted into the error port-Hamiltonian system.
\begin{lemma}
    Consider the system \eqref{Square_PH_sys} with the input
    \begin{align}
            u &= G(q)^{-1}\Bigl( D(q,\eta)\eta^\mathrm{d}(q,t) + \frac{\partial \eta^\mathrm{d}(q,t)}{\partial q}T(q)\eta + T(q)^{-1}\ddot{q}^\mathrm{d}(t) + v \Bigr)  \label{Feedback_Org2Err}
    \end{align}
    with $v \in \mathbb{R}^n$.
    Then the system is converted into the following closed-loop system with Hamiltonian function $\tilde{H}(\tilde{\eta}) = (1/2)\|\tilde{\eta}\|^2$
    \begin{align}
        \begin{pmatrix} \dot{\tilde{q}} \\ \dot{\tilde{\eta}} \end{pmatrix} &= 
        \left( \begin{array}{@{\,}c@{\>}c@{\,}}
            0_n & T(q) \\ -T(q)^\top &  -D(q,\eta)
        \end{array} \right)
        \begin{pmatrix} \nabla_{\tilde{q}} \tilde{H}(\tilde{\eta}) \\ \nabla_{\tilde{\eta}} \tilde{H}(\tilde{\eta}) \end{pmatrix} + 
        \begin{pmatrix} 0_n \\ I_n \end{pmatrix} v. \label{Error_PH_Sys}
    \end{align}
\end{lemma}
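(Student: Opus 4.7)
The plan is a direct substitution calculation. Since $\tilde{H}(\tilde\eta) = \frac{1}{2}\|\tilde\eta\|^2$ has gradients $\nabla_{\tilde q}\tilde H = 0$ and $\nabla_{\tilde \eta}\tilde H = \tilde\eta$, verifying \eqref{Error_PH_Sys} reduces to showing $\dot{\tilde q} = T(q)\tilde\eta$ and $\dot{\tilde\eta} = -D(q,\eta)\tilde\eta + v$ along the closed loop.

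First I would handle the configuration equation. Using the first row of \eqref{Square_PH_sys}, $\dot q = T(q)\eta$, together with the definition $\eta^{\mathrm d}(q,t) = T(q)^{-1}\dot q^{\mathrm d}(t)$ (which implies $T(q)\eta^{\mathrm d} = \dot q^{\mathrm d}$), I get $\dot{\tilde q} = \dot q - \dot q^{\mathrm d} = T(q)(\eta - \eta^{\mathrm d}) = T(q)\tilde\eta$, matching the first row of the target system. Next I would handle the momentum equation. By the chain rule, $\frac{d}{dt}\eta^{\mathrm d}(q,t) = \frac{\partial \eta^{\mathrm d}}{\partial q}\dot q + \frac{\partial \eta^{\mathrm d}}{\partial t}$, where differentiating the definition with $q$ held fixed gives $\frac{\partial \eta^{\mathrm d}}{\partial t} = T(q)^{-1}\ddot q^{\mathrm d}(t)$, and $\frac{\partial \eta^{\mathrm d}}{\partial q}\dot q = \frac{\partial \eta^{\mathrm d}}{\partial q}T(q)\eta$. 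From the second row of \eqref{Square_PH_sys}, noting $\nabla_q H = 0$, I have $\dot\eta = -D(q,\eta)\eta + G(q)u$. Substituting the feedback \eqref{Feedback_Org2Err} yields
\begin{align*}
\dot\eta = -D(q,\eta)(\eta - \eta^{\mathrm d}) + \frac{\partial \eta^{\mathrm d}}{\partial q}T(q)\eta + T(q)^{-1}\ddot q^{\mathrm d} + v.
\end{align*}
Subtracting $\frac{d}{dt}\eta^{\mathrm d}$ then cancels the two feedforward terms, leaving $\dot{\tilde\eta} = -D(q,\eta)\tilde\eta + v$, exactly as required.

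The whole argument is mechanical; there is no real obstacle beyond careful bookkeeping. The only subtlety worth stating is that the feedback \eqref{Feedback_Org2Err} is constructed precisely so that its feedforward components $(\partial \eta^{\mathrm d}/\partial q)T(q)\eta + T(q)^{-1}\ddot q^{\mathrm d}$ cancel the total time derivative of $\eta^{\mathrm d}(q,t)$ along trajectories, while $D(q,\eta)\eta^{\mathrm d}$ compensates the mismatch between the actual and reference momentum inside the damping term. The extra care needed is in taking $\partial \eta^{\mathrm d}/\partial t$ with $q$ held fixed so that the chain rule correctly separates the explicit time dependence from the $q$-dependence of $\eta^{\mathrm d}$.
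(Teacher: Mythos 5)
Your proposal is correct and follows essentially the same route as the paper's own proof: compute $\dot{\tilde q}=T(q)\tilde\eta$ from the kinematic equation and the definition of $\eta^{\mathrm d}$, obtain $\dot\eta^{\mathrm d}$ by the chain rule, and substitute the feedback so the feedforward terms cancel, leaving $\dot{\tilde\eta}=-D(q,\eta)\tilde\eta+v$. Your explicit observation that $\nabla_{\tilde q}\tilde H=0$ and $\nabla_{\tilde\eta}\tilde H=\tilde\eta$ reduce the target system to these two equations is left implicit in the paper but is the same computation.
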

\begin{proof}
    The time derivative of $\tilde{q}$ is calculated as
    \begin{align}
        \dot{\tilde{q}} &= \dot{q} - \dot{q}^{\mathrm{d}}(t) \notag \\
                        &= T(q)\eta - T(q)\eta^{\mathrm{d}}(t) = T(q)\tilde{\eta} \label{dot_tilde_q}.
    \end{align}
    On the other hand, the time derivative of $\eta^{\mathrm{d}}(q,t)$ is derived as
    \begin{align}
        \dot{\eta}^{\mathrm{d}}(q,t) &= \frac{\partial \eta^\mathrm{d}(q,t)}{\partial q}\dot{q} + T(q)^{-1}\ddot{q}^{\mathrm{d}}(t) \notag \\
        &= \frac{\partial \eta^\mathrm{d}(q,t)}{\partial q}T(q)\eta + T(q)^{-1}\ddot{q}^{\mathrm{d}}(t) \label{dot_eta_d}.
    \end{align}
    Substituting \eqref{Feedback_Org2Err} into the system \eqref{Square_PH_sys} and using \eqref{dot_eta_d}, we obtain
    \begin{align}
        \dot{\eta} &= -D(q,\eta)\eta + G(q)u \notag \\
        &= -D(q,\eta)(\eta - \eta^\mathrm{d}(q,t)) + \dot{\eta}^{\mathrm{d}}(q,t) + v \notag \\
        \dot{\tilde{\eta}} &= -D(q,\eta)\tilde{\eta} + v. \label{dot_tilde_p}
    \end{align}
    Equations \eqref{dot_tilde_q} and \eqref{dot_tilde_p} coincide with \eqref{Error_PH_Sys}.
    Therefore, the input \eqref{Feedback_Org2Err} transforms the system \eqref{Square_PH_sys} into the closed-loop system \eqref{Error_PH_Sys}.
\end{proof}
This result is similar to the results in \cite{Fujimoto2003},\cite{Ferguson2019}.
The symbol $v$ denotes an additional input to be designed later.
Using this lemma, we can obtain an error coordinate system \eqref{Error_PH_Sys}, where the Hamiltonian function $\tilde{H}$ is independent of $\tilde{q}$.
Therefore, the proposed passivity-based sliding mode control technique can also be applied to the error system \eqref{Error_PH_Sys} in the same manner.
The next subsection gives the passivity-based sliding mode controller for trajectory tracking control.

\subsection{Controller design}
In this section, let us apply the proposed controller to trajectory tracking control problems.
Similarly to the previous section, we adopt the following assumption.
\begin{assumption}\label{asm_positive_Lambda_Tracking}
    The symmetric matrix $\tilde{\Lambda}(q, \tilde{q},\tilde{\eta})$ defined by
    \begin{align}
        \tilde{\Lambda}(q,\tilde{q},\tilde{\eta}) &\equiv 
        \frac{\partial \phi(\tilde{q}, \tilde{\eta})}{\partial \tilde{q}} T(q) \frac{\partial \phi(\tilde{q}, \tilde{\eta})}{\partial \tilde{\eta}}^\top  + \frac{\partial \phi(\tilde{q},\tilde{\eta})}{\partial \tilde{\eta}}T(q)^\top \frac{\partial \phi(\tilde{q},\tilde{\eta})}{\partial \tilde{q}}^\top \label{def:Lambda_tilde}
    \end{align}
    satisfies
    \begin{align}
        \tilde{\Lambda}(q,\tilde{q}, \tilde{\eta}) \succ \varepsilon I_m, \forall q, \tilde{q}, \tilde{\eta} \label{asm_posLambda_track}
    \end{align}
    with $\varepsilon > 0$
\end{assumption}

The following theorem gives a passivity-based sliding mode controller for trajectory tracking control. 

\begin{theorem}\label{thm_tracking}
Consider the error coordinate system \eqref{Error_PH_Sys} with any function $U: \mathbb{R}^m \to \mathbb{R}$ and a vector function $\phi: \mathbb{R}^m \times \mathbb{R}^m \to \mathbb{R}^m$ satisfying $\phi(0,0)=0$.
Suppose that the partial derivatives $\partial \phi(\tilde{q},\tilde{\eta})/\partial \tilde{q}$ and $\partial \phi(\tilde{q},\tilde{\eta})/\partial \tilde{\eta}$ of $\phi(\tilde{q}, \tilde{\eta})$ are nonsingular for any $\tilde{q}$ and $\tilde{\eta}$.
Then the feedback controller 
\begin{align}
    v = &- \tilde{\Lambda}(q,\tilde{q},\tilde{\eta}) \nabla_{\sigma}U(\sigma) + D(q,\eta)\tilde{\eta} - \frac{\partial \phi(\tilde{q}, \tilde{\eta})}{\partial \tilde{\eta}}^{-1}\frac{\partial \phi(\tilde{q}, \tilde{\eta})}{\partial \tilde{q}}T(q) \tilde{\eta} \label{Feedback_v}
\end{align}
converts the system \eqref{Error_PH_Sys} into the following closed-loop port-Hamiltonian system
\begin{align}
    &\begin{pmatrix} \dot{\tilde{q}} \\ \dot{\tilde{\eta}} \end{pmatrix} = 
    \begin{pmatrix}
        -T\frac{\partial \phi}{\partial \tilde{\eta}}^\top \frac{\partial \phi}{\partial \tilde{q}}^{-\top}& T \\ -T^\top & -\frac{\partial \phi}{\partial \tilde{\eta}}^{-1}\frac{\partial \phi}{\partial \tilde{q}}T    
    \end{pmatrix}
    \begin{pmatrix} \nabla_{\tilde{q}} \tilde{H}_{\mathrm{smc}} \\ \nabla_{\tilde{\eta}} \tilde{H}_{\mathrm{smc}} \end{pmatrix} \notag \\
    &\tilde{H}_{\mathrm{smc}}(\tilde{q},\tilde{\eta}) = \frac{1}{2}\|\tilde{\eta}\|^2 + U(\phi(\tilde{q}, \tilde{\eta})).
\label{SMC_Error_PH_sys}
\end{align}
Furthermore, if Assumptions~\ref{asm_U} and \ref{asm_positive_Lambda_Tracking} hold, then the sliding variable $\sigma = \phi(\tilde{q}, \tilde{\eta})$ is enforced to converge to zero after a finite transient and the origin of the closed-loop system \eqref{SMC_Error_PH_sys} is asymptotically stable with the Lyapunov function $\tilde{H}_{\mathrm{smc}}(\tilde{q},\tilde{\eta})$.
\end{theorem}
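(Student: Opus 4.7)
The plan is to observe that the error system \eqref{Error_PH_Sys} has exactly the same port-Hamiltonian structure as \eqref{Square_PH_sys} with $G(q)$ replaced by $I_n$, $(q,\eta)$ replaced by $(\tilde{q}, \tilde{\eta})$, and Hamiltonian $\tilde H(\tilde\eta) = (1/2)\|\tilde\eta\|^2$. Hence Theorem~\ref{thm_stabilization} transfers to this setting essentially verbatim, with Assumption~\ref{asm_positive_Lambda_Tracking} playing the role of Assumption~\ref{asm_positive_Lambda}. The only structural novelty is that the coefficient matrices $T(q)$ and $D(q,\eta)$ depend on the original coordinates rather than on $(\tilde q, \tilde\eta)$, but this does not interfere with the Lyapunov reasoning.

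For the first (algebraic) part, I would substitute the feedback \eqref{Feedback_v} into \eqref{Error_PH_Sys} and compute $\dot{\tilde q}$ and $\dot{\tilde\eta}$ directly, using the chain rule identities analogous to \eqref{rnd_U_rnd_q}, \eqref{rnd_U_rnd_p} applied to $U(\phi(\tilde q, \tilde\eta))$. The terms $D(q,\eta)\tilde\eta$ and $(\partial\phi/\partial\tilde\eta)^{-1}(\partial\phi/\partial\tilde q)T(q)\tilde\eta$ in $v$ cancel the corresponding contributions arising from $D(q,\eta)$ and from the skew-symmetric off-diagonal block, leaving precisely the dissipative closed-loop structure of \eqref{SMC_Error_PH_sys}.

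For finite-time convergence of $\sigma = \phi(\tilde q, \tilde\eta)$, I would apply the coordinate change $(\tilde q, \tilde\eta) \mapsto (\sigma, \tilde\eta)$ exactly as in \eqref{coordinate_trans_to_sigma}. A short computation shows
\begin{align*}
\dot\sigma = -\tilde\Lambda(q,\tilde q,\tilde\eta)\,\nabla_\sigma U(\sigma),
\end{align*}
which is the same $\sigma$-dynamics as \eqref{DotSigma} but with $\tilde\Lambda$ in place of $\Lambda$. Taking $U(\sigma)$ as a Lyapunov candidate and splitting into the reaching mode and each sub-sliding mode, the inequalities \eqref{inequarity_dot_U_rm} and \eqref{inequarity_dot_U_subsm} carry over with $\tilde\Lambda$ replacing $\Lambda$; the Schur-complement decomposition of $\tilde\Lambda$ gives $\tilde\Lambda/\tilde\Lambda_{11}\succ \varepsilon I_{m-k}$ by Assumption~\ref{asm_positive_Lambda_Tracking}, and Assumption~\ref{asm_U} then yields $\dot U\le -(\varepsilon c^2/a^2)U^{2\rho}$ in every case. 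Integration produces the same finite-time bound $t_0 + a^2 U(\sigma(t_0))^{1-2\rho}/(\varepsilon c^2)$ for $\sigma\to 0$.

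Finally, asymptotic stability under $\tilde H_{\mathrm{smc}}$ is obtained by the same calculation as in the last part of the proof of Theorem~\ref{thm_stabilization}: one rewrites $\dot{\tilde H}_{\mathrm{smc}}$ as a sum of two weighted quadratic forms in $\tilde\Lambda$ and $\tilde\Lambda^{-1}$, each of which is strictly negative in reaching mode, sub-sliding mode, and on $\sigma=0$. I do not expect any essential obstacle — the theorem is structurally a transcription of Theorem~\ref{thm_stabilization} to the error coordinates — and the only thing that needs care is bookkeeping the dependence of $T$ and $D$ on the original $(q,\eta)$ while differentiating $\phi$ with respect to $(\tilde q,\tilde\eta)$, which the hypotheses have been set up precisely to handle.
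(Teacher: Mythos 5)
Your proposal is correct and follows essentially the same route as the paper: verify the closed-loop form by direct substitution using the chain rule on $U(\phi(\tilde q,\tilde\eta))$, and then transfer the finite-time convergence and asymptotic stability arguments of Theorem~\ref{thm_stabilization} with $\tilde\Lambda$ in place of $\Lambda$ (the paper states this transfer in one sentence, whereas you spell it out). No discrepancy to report.
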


\begin{proof}
    The partial derivatives of $U(\phi(\tilde{q}), \tilde{\eta})$ with respect to $\tilde{q}$ and $\tilde{\eta}$ are calculated as
    \begin{align}
        \frac{\partial U(\phi(\tilde{q},\tilde{\eta}))}{\partial \tilde{q}} = \frac{\partial U(\sigma)}{\partial \sigma}
        \frac{\partial \phi(\tilde{q},\tilde{\eta})}{\partial \tilde{q}}, \label{rnd_U_rnd_tilde_q}\\
        \frac{\partial U(\phi(\tilde{q},\tilde{\eta}))}{\partial \tilde{\eta}} = \frac{\partial U(\sigma)}{\partial \sigma}
        \frac{\partial \phi(\tilde{q},\tilde{\eta})}{\partial \tilde{\eta}} \label{rnd_U_rnd_tilde_p}.
    \end{align}
    By substituting the input \eqref{Feedback_v} into the error system \eqref{Error_PH_Sys} and using \eqref{rnd_U_rnd_tilde_q} and \eqref{rnd_U_rnd_tilde_p}, it is proved that the error system \eqref{Error_PH_Sys} is transformed into the closed-loop system \eqref{SMC_Error_PH_sys}.

    The finite convergence property of sliding variable $\sigma$ and asymptotic stability of the closed-loop system \eqref{SMC_Error_PH_sys} are also proved in the same way as in Theorem~\ref{thm_stabilization}.
    This completes the proof.
\end{proof}

This theorem shows that passivity-based sliding mode control can be applied to trajectory tracking control.
As pointed out in Remark~\ref{rmk_unnecessarity_of_diagonal}, the controller in \cite{Fujimoto2021} requires the matrix $\Lambda(q, \tilde{q}, \tilde{\eta})$ to be diagonal.
However, since $T(q)$ that comes from the inertia matrix $M(q)$ depends on $q$ and the partial derivatives of the free parameter $\phi(\tilde{q}, \tilde{\eta})$ depend on $\tilde{q}$ and $\tilde{\eta}$, it is quite difficult to find a parameter $\phi(\tilde{q},\tilde{\eta})$ so that  $\tilde{\Lambda}(q, \tilde{q}, \tilde{\eta})$ is diagonal except when $T(q)$ is a constant matrix.
The authors have also proposed the method that the complex nonlinearity of the plant system is canceled with an additional nonlinear feedback so that the condition holds \cite{Sakata2021}.
But such canceling the nonlinearity of the system does not work well in the presence of the modeling error.
On the other hand, the proposed controller requires the relaxed condition \eqref{asm_posLambda_track} without canceling the nonlinearity, so it can be applied to trajectory tracking control and is expected to be more robust against modeling errors and external disturbances.

In the next section, the robustness of the proposed control system is analyzed.
\section{Robustness analysis}
Sliding mode control is known as a robust control method against modeling uncertainties and disturbances.
In this section, we investigate the robustness of the proposed control system.
Let us consider the system \eqref{Original_PH_sys} with disturbances 
\begin{align}
    &\begin{pmatrix} \dot{q} \\ \dot{p} \end{pmatrix} = 
    \left(\begin{array}{@{}c@{\ }c@{}}
        0_m & I_m \\ -I_m &  -D_0(q,p)\!
    \end{array}\right)\!
    \begin{pmatrix} \nabla_{\!q} H_0 \\ \nabla_{\!p} H_0 \end{pmatrix} + 
    \begin{pmatrix} 0_m \\ G_0(q)\! \end{pmatrix} \! u + 
    \begin{pmatrix} d_{\mathrm{um}}\! \\ d_{\mathrm{m}} \end{pmatrix}\!, \notag \\
    &H_0(q,p) = \frac{1}{2}p^\top M(q)^{-1} p, \label{Original_PH_sys_with_disturbance}
\end{align}
where $d_{\mathrm{um}}\in\mathbb{R}^m$ and $d_{\mathrm{m}}\in\mathbb{R}^m$ represent unmatched and matched disturbance respectively.
Since we consider mechanical systems, it is unlikely that disturbances will enter the kinematics, but here we assume that sensor noise and other factors are influencing the kinematics.

For the system \eqref{Original_PH_sys_with_disturbance}, let us apply the feedback input \eqref{feedback_genPBSMC} that has free parameters $\phi$ and $U$.
In this section, a vector function that is linear in momentum $\eta$ is selected as $\phi$, so the sliding variable $\sigma$ is given by
\begin{align}
    \sigma = \phi(q, \eta) = \psi(q) + \eta, 
\end{align}
where $\psi:\mathbb{R}^m \to \mathbb{R}^m$ is a diffeomorphism satisfying $\psi(0) =0$.
This choice satisfies the condition of $\phi(q,\eta)$ in Theorem~\ref{thm_stabilization}.
Note that the matrix $\Lambda$ defined by \eqref{DefLambda} becomes a function of $q$ as 
\begin{align*}
    \Lambda(q) = \frac{\partial \psi(q)}{\partial q}T(q) + T(q)^\top  \frac{\partial \psi(q)}{\partial q}^\top.
\end{align*}
Then the feedback input is calculated as
\begin{align}
    u = -G(q)^{-1} \!\left(\Lambda(q) \nabla_\sigma U(\sigma) - \!\left(D(q,\eta)+\frac{\partial \psi(q)}{\partial q}T(q)\!\right)\eta\right)\!,
\end{align}
and we obtain the following closed-loop port-Hamiltonian system with disturbances
\begin{align}
    &\begin{pmatrix} \dot{q} \\ \dot{\eta} \end{pmatrix} = 
    \begin{pmatrix}
        -T(q)\frac{\partial \psi(q)}{\partial q}^{-\top} & T(q) \\ 
        -T(q)^\top & -\frac{\partial \psi(q)}{\partial q}T(q)
    \end{pmatrix}
    \begin{pmatrix} \nabla_{q} H_{\mathrm{smc}} \\ \nabla_{\eta} H_{\mathrm{smc}} \end{pmatrix}  +
    \begin{pmatrix} d_{\mathrm{um}} \\ T(q)^\top d_{\mathrm{m}} \end{pmatrix}, \notag \\
    &H_{\mathrm{smc}}(q,\eta) = \frac{1}{2}\|\eta\|^2 + U(\psi(q) + \eta). \label{PBSMC_sys_with_disturbance}
\end{align}
For the system, the following theorem holds.
\begin{theorem}
    Consider the closed-loop system with disturbance \eqref{PBSMC_sys_with_disturbance}.
    The state of the control system \eqref{PBSMC_sys_with_disturbance} converges to the set defined by 
    \begin{align}
        B_1\equiv \left\{ x\in\mathbb{R}^{2n}\,\!\left|  \, \left\|\begin{pmatrix} \nabla_\eta U(\psi(q)+\eta) \\ \nabla_\eta U(\psi(q)+\eta) + \eta   \end{pmatrix}\right\|^2 \!\leq \gamma_1(q) \|d\|^2  \!\right.\right\}\!, 
    \end{align}
    where
    \begin{align}
        \gamma_1 (q) = \frac{4\left\{ \max \left[\lambda_{\max}\!\left(\frac{\partial \psi(q)}{\partial q}\frac{\partial \psi(q)}{\partial q}^\top\right)\!, \lambda_{\max}(T(q)\!^\top T(q))\right] \!\right\}}{\lambda_{\min}(\Lambda(q))^2}.
    \end{align}
    Moreover, if 
    \begin{align}
        \lambda_{\min}(\Lambda(q))\|\nabla_\eta U(\psi(q)+\eta)\| - \|\tfrac{\partial \psi(q)}{\partial q}d_{\mathrm{um}} + T(q)^{\!\top} d_{\mathrm{m}}\| > \delta \label{asm_nonzero_delU}
    \end{align}
    holds with a positive constant $\delta$, then the state of the control system \eqref{PBSMC_sys_with_disturbance} converges to the set 
    \begin{align}
        B_2 \equiv \left\{ x\in\mathbb{R}^{2n}\,|\,\{\|\eta\|^2 \leq \gamma_2(q) \|d_{\mathrm{um}}\|^2\} \wedge \{\psi(q) + \eta =0 \}\right\}
    \end{align}
    where 
    \begin{align}
        \gamma_2(q)= \frac{4\lambda_{\max}\left(\frac{\partial \psi(q)}{\partial q} \frac{\partial \psi(q)}{\partial q}^\top\right)}{\lambda_{\min}(\Lambda(q))^2}.
    \end{align}
\end{theorem}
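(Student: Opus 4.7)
The plan is to analyze two Lyapunov-like quantities in succession. For the first claim about $B_1$, I use the closed-loop Hamiltonian $H_{\mathrm{smc}}$ directly and obtain ultimate boundedness by an ISS-type argument. For the second claim about $B_2$, I exploit the fact that the sliding-variable dynamics decouple nicely so that $U(\sigma)$ can play the role of a standalone Lyapunov function for finite-time convergence to the sliding manifold, after which a reduced Lyapunov analysis on the surface $\sigma=0$ delivers the tighter bound.

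For the $B_1$ part, I would differentiate $H_{\mathrm{smc}}$ along \eqref{PBSMC_sys_with_disturbance}. Using $\nabla_q H_{\mathrm{smc}}=(\partial\psi/\partial q)^\top \nabla_\sigma U$ and $\nabla_\eta H_{\mathrm{smc}}=\eta+\nabla_\sigma U$, the structure-matrix contribution telescopes (exactly as in the dissipation calculation in Theorem~\ref{thm_stabilization}) to $-\tfrac{1}{2}\|\nabla_\sigma U\|_{\Lambda}^2-\tfrac{1}{2}\|\nabla_\sigma U+\eta\|_{\Lambda}^2$, while the disturbance contributes $\nabla_\sigma U^\top (\partial\psi/\partial q)\,d_{\mathrm{um}}+(\nabla_\sigma U+\eta)^\top T^\top d_{\mathrm{m}}$. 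Setting $\xi=(\nabla_\sigma U,\;\nabla_\sigma U+\eta)$, the dissipation dominates $\tfrac{1}{2}\lambda_{\min}(\Lambda)\|\xi\|^2$, and two applications of Cauchy--Schwarz (once inside the inner products, once to combine the two disturbance norms into $\max[\lambda_{\max}((\partial\psi/\partial q)(\partial\psi/\partial q)^\top),\lambda_{\max}(T^\top T)]\,\|d\|^2$) yield $\dot H_{\mathrm{smc}}<0$ whenever $\|\xi\|^2>\gamma_1(q)\|d\|^2$. Standard ultimate-boundedness then gives convergence to $B_1$.

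For the $B_2$ part, I would first note the key algebraic cancellation: computing $\dot\sigma=(\partial\psi/\partial q)\dot q+\dot\eta$ by substituting \eqref{PBSMC_sys_with_disturbance} makes all cross terms involving $\nabla H_{\mathrm{smc}}$ collapse, leaving
\begin{align*}
\dot\sigma=-\Lambda(q)\nabla_\sigma U(\sigma)+\tfrac{\partial\psi}{\partial q}d_{\mathrm{um}}+T(q)^\top d_{\mathrm{m}}.
\end{align*}
Using $U(\sigma)$ as Lyapunov function gives $\dot U\le -\|\nabla_\sigma U\|\bigl[\lambda_{\min}(\Lambda)\|\nabla_\sigma U\|-\|(\partial\psi/\partial q)d_{\mathrm{um}}+T^\top d_{\mathrm{m}}\|\bigr]$, so \eqref{asm_nonzero_delU} yields $\dot U\le -\delta\|\nabla_\sigma U\|\le -\delta c\,U^\rho$ by Assumption~\ref{asm_U}, and integration in the style of Theorem~\ref{thm_stabilization} delivers finite-time convergence $\sigma\to 0$.

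Once on $\sigma=0$, i.e.\ $\eta=-\psi(q)$, the matched disturbance is annihilated by the equivalent control and the configuration dynamics in \eqref{PBSMC_sys_with_disturbance} reduce to $\dot q=T(q)\eta+d_{\mathrm{um}}=-T(q)\psi(q)+d_{\mathrm{um}}$. Taking $V(q)=\tfrac{1}{2}\|\psi(q)\|^2=\tfrac{1}{2}\|\eta\|^2$, a direct calculation gives $\dot V=-\tfrac{1}{2}\psi^\top\Lambda(q)\psi+\psi^\top(\partial\psi/\partial q)d_{\mathrm{um}}$, and the same Cauchy--Schwarz bound as in the first paragraph produces $\dot V<0$ whenever $\|\psi\|^2>\gamma_2(q)\|d_{\mathrm{um}}\|^2$, which is exactly the complement of $B_2$. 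The step I expect to be most delicate is the self-consistency of the sliding motion under the perturbation: I need to check that the equivalent value of $\nabla_\sigma U$ required to keep $\dot\sigma=0$ lies in the (sub)differential of $U$ at $\sigma=0$, for which \eqref{asm_nonzero_delU} is precisely the right sufficient condition, and I must track the $q$-dependence of $\gamma_1(q)$ and $\gamma_2(q)$ carefully so that the ultimate-boundedness conclusion is stated pointwise in $q$. The remaining work is routine.
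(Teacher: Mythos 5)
Your proposal is correct and follows essentially the same two-stage route as the paper: $H_{\mathrm{smc}}$ as an ISS-type Lyapunov function with the dissipation identity $-\tfrac{1}{2}\|\nabla_\sigma U\|_\Lambda^2-\tfrac{1}{2}\|\nabla_\sigma U+\eta\|_\Lambda^2$ for $B_1$, then $U(\sigma)$ for finite-time reaching under \eqref{asm_nonzero_delU} followed by a quadratic Lyapunov analysis on $\sigma=0$ for $B_2$, yielding the same constants $\gamma_1$ and $\gamma_2$. Your minor variations (direct Cauchy--Schwarz in place of the paper's Young's inequality with the optimal parameter $c=\underline{\lambda}_\Lambda/(2\overline{\lambda})$, and working with $V=\tfrac{1}{2}\|\psi(q)\|^2$ on the reduced $q$-dynamics rather than with $\dot\eta$ and the equivalent control) are algebraically equivalent, and your remark about verifying that the equivalent value of $\nabla_\sigma U$ lies in the subdifferential of $U$ at $\sigma=0$ is a legitimate subtlety the paper leaves implicit.
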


\begin{proof}
    In this proof, the following symbols are used for simple notation
    \begin{align*}
        \overline{\lambda}(q) &\equiv \max\!\left(\lambda_{\max}(T(q)^\top T(q)), \lambda_{\max}\!\left(\frac{\partial \psi(q)}{\partial q}\frac{\partial \psi(q)}{\partial q}^\top\right) \!\right)\!,\\
        \overline{\lambda}_{\psi}(q) &\equiv \lambda_{\max}\left(\frac{\partial \psi(q)}{\partial q} \frac{\partial \psi(q)}{\partial q}^\top \right),\ \underline{\lambda}_{\Lambda}(q) \equiv \lambda_{\min}(\Lambda(q)).
    \end{align*}
    The time derivative of $H_{\mathrm{smc}}$ satisfies
    \begin{align*}
        \dot{H}_{\mathrm{smc}} 
        &= -\frac{1}{2}\begin{pmatrix} \nabla_q H_{\mathrm{smc}} \\ \nabla_\eta H_{\mathrm{smc}} \end{pmatrix}^{\!\!\top}\!\! 
        \begin{pmatrix}
            \frac{\partial \psi}{\partial q}^{-1} \Lambda \frac{\partial \psi}{\partial q}^{-\top} & 0_n \\ 0_n & \Lambda
        \end{pmatrix}\!\!
        \begin{pmatrix} \nabla_q H_{\mathrm{smc}} \\ \nabla_\eta H_{\mathrm{smc}} \end{pmatrix} + \begin{pmatrix} \nabla_q H_{\mathrm{smc}} \\ \nabla_\eta H_{\mathrm{smc}} \end{pmatrix}^{\top}
        \begin{pmatrix} d_{\mathrm{um}} \\ T^\top d_{\mathrm{m}} \end{pmatrix} \\
        &\leq -\frac{1}{2}\underline{\lambda}_{\Lambda}(\|\nabla_\eta U\|^2 + \|\nabla_\eta H_{\mathrm{smc}}\|^2 ) + \frac{c}{2}\left\| \begin{pmatrix} \frac{\partial \psi}{\partial q}^\top\nabla_\eta U\\ T\,\nabla_\eta H_{\mathrm{smc}} \end{pmatrix} \right\|^2 + \frac{1}{2c}\|d\|^2 \\
        &\leq -\frac{1}{2}(\underline{\lambda}_{\Lambda} - c\overline{\lambda})\left\| \begin{pmatrix} \nabla_\eta U\\ \nabla_\eta H_{\mathrm{smc}} \end{pmatrix} \right\|^2 + \frac{1}{2c}\|d\|^2 \\
        &\leq -\frac{\underline{\lambda}_{\Lambda} - c\overline{\lambda}}{2}
         \left(
            \left\| \begin{pmatrix} \nabla_\eta U\\ \nabla_\eta H_{\mathrm{smc}} \end{pmatrix} \right\|^2  - 
            \frac{1}{2c(\underline{\lambda}_{\Lambda} - c\overline{\lambda})}\|d\|^2
        \right)
    \end{align*}
    where $c\in \mathbb{R}$ is an arbitrary positive constant which is a result of the application of Young's inequality in the second line.
    The last inequality implies the Hamiltonian function $H_{\mathrm{smc}}$ decreases monotonically if 
    \begin{align*}
        \left\| \begin{pmatrix} \nabla_\eta U(q,\eta)\\ \nabla_\eta H_{\mathrm{smc}(q,\eta)} \end{pmatrix} \right\|^2 \geq 
            \frac{1}{c(\underline{\lambda}_{\Lambda}(q) - c\overline{\lambda}(q))}\|d\|^2 \label{Robust_radius_with_c}
    \end{align*}
    holds.
    By minimizing the right-hand side of \eqref{Robust_radius_with_c}, we can evaluate the convergence set tightly.
    Thus, we take $c = \underline{\lambda}_{\Lambda}(q) /(2\overline{\lambda}(q))$, and then obtain
    \begin{align*}
        \dot{H}_{\mathrm{smc}} &\leq -\frac{\underline{\lambda}_{\Lambda}(q)}{4}\left( \left\| \begin{pmatrix} \nabla_\eta U(q,\eta) \\ \nabla_\eta H_{\mathrm{smc}}(q,\eta) \end{pmatrix} \right\|^2 - \frac{4\overline{\lambda}(q)}{\underline{\lambda}_{\Lambda}(q)^2}\|d\|^2\right),
    \end{align*}
    which means the state variable converges to the set $B_1$.
    This completes the former part of the proof.

    Next, let us consider the case that the assumption \eqref{asm_nonzero_delU} holds.
    Let us apply the coordinate transformation \eqref{coordinate_trans_to_sigma} as in Theorem~\ref{thm_stabilization}.
    Then the resulting closed-loop system is described as
    \begin{align*}
        &\begin{pmatrix} \dot{\sigma} \\ \dot{\eta} \end{pmatrix} = 
        \begin{pmatrix} -\Lambda & 0 \\ -\Lambda & -\frac{\partial \phi}{\partial q}T \end{pmatrix}\!\!
        \begin{pmatrix} \nabla_{\sigma} H_{\mathrm{smc}} \\ \nabla_{\eta} H_{\mathrm{smc}} \end{pmatrix}  + 
        \begin{pmatrix} \frac{\partial \phi}{\partial q}d_{\mathrm{um}} + T^\top d_{\mathrm{m}} \\T^\top d_{\mathrm{m}}\end{pmatrix},\\
        &H_{\mathrm{smc}} = \frac{1}{2}\|\eta\|^2 + U(\sigma).
    \end{align*}
    To prove $\sigma$ converges to zero in a finite time, let us consider $U$ as a Lyapunov function candidate.
    With the assumption \eqref{asm_nonzero_delU}, the time derivative of $U$ satisfies
    \begin{align}
        \dot{U} 
        & = \nabla_{\sigma} U^\top ( -\Lambda \nabla_{\sigma} U + \tfrac{\partial \psi}{\partial q}d_{\mathrm{um}} + T^\top d_{\mathrm{m}} ) \notag \\ 
        & \leq -\lambda_{\min}(\Lambda)\|\nabla_{\sigma} U\|^2 + \|\nabla_{\sigma} U\| \|\tfrac{\partial \psi}{\partial q}d_{\mathrm{um}} + T^\top d_{\mathrm{m}}\| \notag \\
        & \leq - \| \nabla_{\sigma} U \| \delta   \\
        & \leq -\delta^2 / \lambda_{\min}(\Lambda) < 0 \label{dot_U_robust}.
    \end{align}
    Integrating \eqref{dot_U_robust} with respect to time, we can prove that the sliding variable $\sigma$ converges to zero and the sliding mode occurs in a finite time.
    In addition, $\dot{\sigma}=\sigma=0$ holds in the sliding mode, and then the time derivative of $H_{\mathrm{smc}}$ is calculated as 
    \begin{align*}
        \dot{H}_{\mathrm{smc}} &= 
        \nabla_{\eta} H_{\mathrm{smc}}^\top (-\Lambda \nabla_{\sigma} H_{\mathrm{smc}} -\tfrac{\partial \psi}{\partial q}T \nabla_{\eta} H_{\mathrm{smc}} + T^\top d_{\mathrm{m}} )\\
        &= \eta^\top (-\tfrac{\partial \psi}{\partial q}d_{\mathrm{um}} - T^\top d_{\mathrm{m}}-\tfrac{\partial \psi}{\partial q}T \eta  + T^\top d_{\mathrm{m}} ) \\
        &\leq -\frac{1}{2} \eta^\top\Lambda \eta + \frac{c}{2}\|\tfrac{\partial \psi}{\partial q}^\top\eta\|^2 + \frac{1}{2c}\|d_\mathrm{um}\| \\
        &\leq -\frac{1}{2}(\underline{\lambda}_{\Lambda} - c \overline{\lambda}_{\psi})\|\eta\|^2 + \frac{1}{2c}\|d_{\mathrm{um}}\|^2
    \end{align*}
    where $c$ is an arbitrary positive constant resulting from applying Young's inequality.
    Similarly to the former part of this proof, we take $c = \underline{\lambda}_{\Lambda}/(2\overline{\lambda}_{\psi})$ so that the ball where the state variable converges becomes minimum.
    Then we obtain
    \begin{align*}
        \dot{H}_{\mathrm{smc}} 
        &\leq -\frac{\underline{\lambda}_{\Lambda}(q)}{4}\left(\|\eta\|^2 -\frac{4\overline{\lambda}_{\psi}(q)}{\underline{\lambda}_{\Lambda}(q)^2}\|d_{\mathrm{um}}\|^2\right)
    \end{align*}
    on the sliding surface $\sigma = \psi(q) + \eta = 0$, which means the state converges to the set $B_2$.
    This completes the proof. 
\end{proof}

This theorem shows that the control system achieves robust stability in the sense that the state variable converges to a set $B_1$ and/or $B_2$.
For example, if a potential function $U(q,\eta) = (q+\eta)^2/2$ is selected, the resulting set $B_1$ is represented as 
\begin{align*}
    \left\|\begin{pmatrix} q + \eta \\ q + 2\eta \end{pmatrix}\right\|^2 \leq \gamma_1(q)\|d\|^2.
\end{align*}
Or, as a sufficient condition, it can be rewritten as 
\begin{align*}
    \left\|\begin{pmatrix} q \\ \eta \end{pmatrix}\right\|^2 \leq \frac{7+\sqrt{45}}{2}\gamma_1(q)\|d\|^2.
\end{align*}
This result would be useful to analyze the control system \eqref{PBSMC_sys_with_disturbance} with a smooth approximation of sliding mode controllers.
It is known that for first order sliding mode controllers there is a tradeoff between the smoothness and robustness of the controller.
Although it is difficult to evaluate a convergence set with conventional sliding mode controllers when a smooth approximation of input is employed, the proposed controller shows the set to which the state of the control system \eqref{PBSMC_sys_with_disturbance} converges by employing the Hamilton function $H_{\mathrm{smc}}$ as a Lyapunov function.

In addition to that, if the assumption \eqref{asm_nonzero_delU} holds and there is no unmatched disturbance $d_{\mathrm{um}}$, the state converges to zero.
For example, $\|\nabla_\eta U(\phi(q)+\eta)\|\geq k$ holds with $k>0$ when $U(\phi(q)+\eta) = k\|\phi(q)+\eta\|_1$ is selected.
Then it is possible to satisfy the condition \eqref{asm_nonzero_delU} by selecting a large value of $k$, which represents the property that sliding mode controllers become robust by increasing the input gain.

In the next section, we demonstrate the effectiveness of the proposed controller.

\section{Numerical example}
In this section, we show the effectiveness of the proposed controller through numerical simulations.
The plant system is a fully-actuated two degrees of freedom planar manipulator arm shown in Fig~\ref{arm}.
The control objective is to make the position of the end of the arm follow the target trajectory.
\begin{figure}[ht]
    \centering
    \includegraphics[width = 0.4\columnwidth]{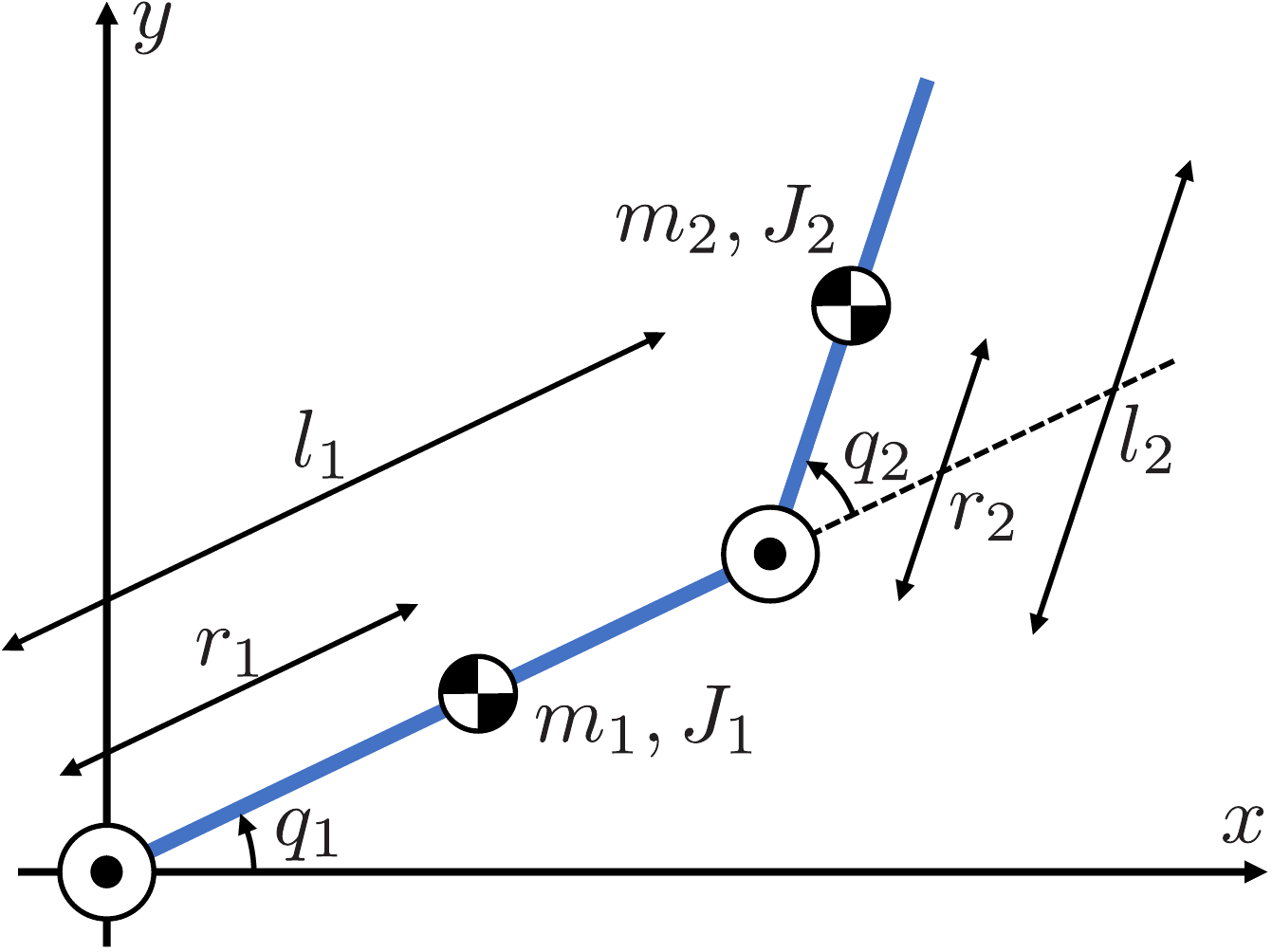}
    \caption{A two degrees of freedom manipulator arm}
    \label{arm}
\end{figure}
The equation of motion of the manipulator arm is described in the port-Hamiltonian representation \eqref{Original_PH_sys} as
\begin{align*}
    \begin{pmatrix} \dot{q} \\ \dot{p} \end{pmatrix} &= 
    \begin{pmatrix} 0_2 & I_2 \\ -I_2 & -D \end{pmatrix}
    \begin{pmatrix} \nabla_q H \\ \nabla_p H \end{pmatrix} +
    \begin{pmatrix} 0_2 \\ I_2 \end{pmatrix}u \\
    H(q,p) &= \frac{1}{2}p^\top M(q)^{-1} p.
\end{align*}
The state of this system consists of the angles of the links and the angular momenta, and we denote them as $q,\ p=M(q)\dot{q}\in \mathbb{R}^2$, respectively.
The inertia matrix $M(q)$ is given by
\begin{align*}
    M(q) &= \begin{pmatrix} M_1 + M_2 + 2M_3 \cos q_2 & M_2 + M_3\cos q_2 \\ M_2 + M_3\cos q_2  & M_2\end{pmatrix},\\
    M_1 &= m_1r_1^2 + m_2l_1^2 + J_1,\quad 
    M_2 = m_2r_2^2 + J_2, M_3 = m_2l_1r_2.
\end{align*}
Here, $m_i$ and $J_i$ denote the mass of the $i$-th link and the moment of inertia of the $i$-th link respectively, the symbol $l_i$ denotes the length of the $i$-th link, and $r_i$ denotes the length from the joint to the center of mass of the $i$-th link.
The damping matrix $D =\mathrm{diag}\,(\nu_1, \nu_2)$ consists of the friction coefficients $\nu_1, \nu_2$.
In this simulation, the physical parameters of the system are given in Table 1.
\begin{table}[b]
\caption{Physical parameters}
\centering
\begin{tabular}{ccc}
        & Link~1 & Link~2 \\ \hline
        Length of link & $l_1=1$ & $l_2=1$\\
        Mass of link & $m_1=1$ & $m_2=1$\\
        Center of mass of link & $r_1=1/2$ & $r_2=1/2$\\
        Moment of inertia of link & $J_1=1/12$ & $J_2=1/12$ \\
        Friction coefficient & $\nu_1=1/2$ & $\nu_2=1/2$
\end{tabular}
\end{table}
For this system, the matrix $T(q)$ satisfying \eqref{Definition_T} is given by
\begin{align*}
    &T(q) = \begin{pmatrix} \tfrac{\sqrt{M_2}}{\sqrt{M_1M_2-M_3^2\cos^2q_2}} & 0 \\ -\tfrac{M_2+M_3\cos q_2}{\sqrt{M_2}\sqrt{M_1M_2-M_3^2\cos^2q_2}} & \tfrac{1}{\sqrt{M_2}} \end{pmatrix},
\end{align*}
which is the result of Cholesky decomposition.

The controller is given by \eqref{Feedback_Org2Err} and \eqref{Feedback_v} where there are the free parameters $\phi(\tilde{q},\tilde{\eta})$ and $U(\sigma)$.
We select a function which depends linearly on $\tilde{q}$ and $\tilde{\eta}$ as $\phi$, and thus the sliding variable $\sigma$ is given by
\begin{align*}
    \sigma &= \phi(\tilde{q},\tilde{\eta}) = \begin{pmatrix} ~2~ & ~0~ \\ ~2~ & ~2~ \end{pmatrix}\begin{pmatrix} \tilde{q}_1 \\ \tilde{q}_2 \end{pmatrix} + \tilde{\eta}.
\end{align*}
In this case, the matrix $\tilde{\Lambda}$ defined by \eqref{def:Lambda_tilde} is calculated as
\begin{align*}
    \tilde{\Lambda}(q) = \frac{\sqrt{3}}{\sqrt{16-9\cos^2 q_2}}\begin{pmatrix} 4 & -3\cos q_2 \\ -3\cos q_2 & 12 \end{pmatrix} \succ 0,
\end{align*}
which satisfies Assumption~\ref{asm_positive_Lambda_Tracking}.
As mentioned in Section~3, general $p$-norm of the sliding variable $U(\sigma) = k\|\sigma\|^r_s$ satisfying \eqref{norm-potential} can be selected as a potential function satisfying Assumption~\ref{asm_U}. 
This class of functions has two parameters $r$ and $s$. 
The free parameter $r$ define how the input is continuous and the parameter $s$ changes the behavior in the reaching mode.
In this simulation, two potential functions are selected as
\begin{align}
    U(p+\phi(q)) &= 2\|\sigma\|_2^{1.3} = 2\|p+\phi(q)\|_2^{1.3}, \label{Ex_s-norm_powered_r}\\
    U(p+\phi(q)) &= 2\|\sigma\|_1 = 2\|p+\phi(q)\|_1 \label{Ex_1-norm_powered_1}.
\end{align}

The desired position of the end of the manipulator arm is selected by
\begin{align*}
    \begin{pmatrix} x^\mathrm{d}(t) \\  y^\mathrm{d}(t) \end{pmatrix} = \begin{pmatrix} 1 + 0.5 \cos t \\ 0.5 \sin t \end{pmatrix}.
\end{align*}
By calculating the inverse kinematics, the desired trajectory of the angle is derived as
\begin{align*}
    q^\mathrm{d}(x^\mathrm{d}(t), y^\mathrm{d}(t)) = 
    \begin{pmatrix}
        \tan^{-1}\!\left(\tfrac{y^\mathrm{d}(t)}{x^\mathrm{d}(t)}\right) - \cos^{-1}\!\left(\tfrac{l_1^2-l_2^2 + r^{\mathrm{d}}(t)^2}{2 l_1 r^{\mathrm{d}}(t)}\right)\\
        \cos^{-1}\!\left( \tfrac{r^{\mathrm{d}}(t)^2 -l_1^2-l_2^2}{2 l_1 l_2} \right)
    \end{pmatrix},
\end{align*}
with $r^{\mathrm{d}}(t)^2 \equiv x^{\mathrm{d}}(t)^2 + y^{\mathrm{d}}(t)^2$.
The initial condition of the state is given by 
\begin{align*}
    (q(0)^\top, p(0)^\top)^\top = (0,0,0,0)^\top.
\end{align*}

Figures \ref{fig:angle}-\ref{fig:Hamiltonian} show the result of the numerical simulations.
Figures \ref{fig:angle} and \ref{fig:sliding_variable} show the responses of angles $q$'s and sliding variables $\sigma$'s.
In Figure \ref{fig:angle}, the solid lines denote the responses of the angles and the dashed-dotted lines denote the desired angles. 
This result shows the angles track the desired trajectory with both potential functions.
In Figure \ref{fig:sliding_variable}, sliding variables converge to zero in a finite time ($t\approx 1.0$), and we can see that the proposed controller works as a sliding mode controller.
In the case $U(\sigma)= 2\|\sigma\|_2^{1.3}$, the sliding variables converge simultaneously, whereas the sliding variables converge independently, in the case $U(\sigma)= 2\|\sigma\|_1$.
The behavior of the sliding variables in the reaching mode can be changed by tuning the parameter $s$.
Figure \ref{fig:input} shows the responses of the inputs $u$.
By comparing Figure~\ref{fig:input}(a) with Figure~\ref{fig:input}(b), there is chattering in the input with the parameter $r=1$ whereas it is alleviated with the parameter $r=1.3$.
As these results show, the free parameters $r$ and $s$ are used to adjust the control performance.
Figure \ref{fig:Hamiltonian} shows the responses of the Hamiltonian functions $H_{\mathrm{smc}}$, and they decrease monotonically as proved in Theorem~\ref{thm_stabilization}.

These results show that the proposed passivity-based controller achieves sliding mode control for mechanical port-Hamiltonian systems.
It guarantees Lyapunov stability by employing the Hamiltonian function as a Lyapunov function.
Moreover, Lyapunov stability of the closed-loop system does not destroy even if the discontinuous feedback input is approximated with a continuous function.

\begin{figure}
    \centering
    \subfigure[$U(\sigma)=2\|\sigma\|_2^{1.3}$]{%
        \includegraphics[width=0.4\columnwidth]{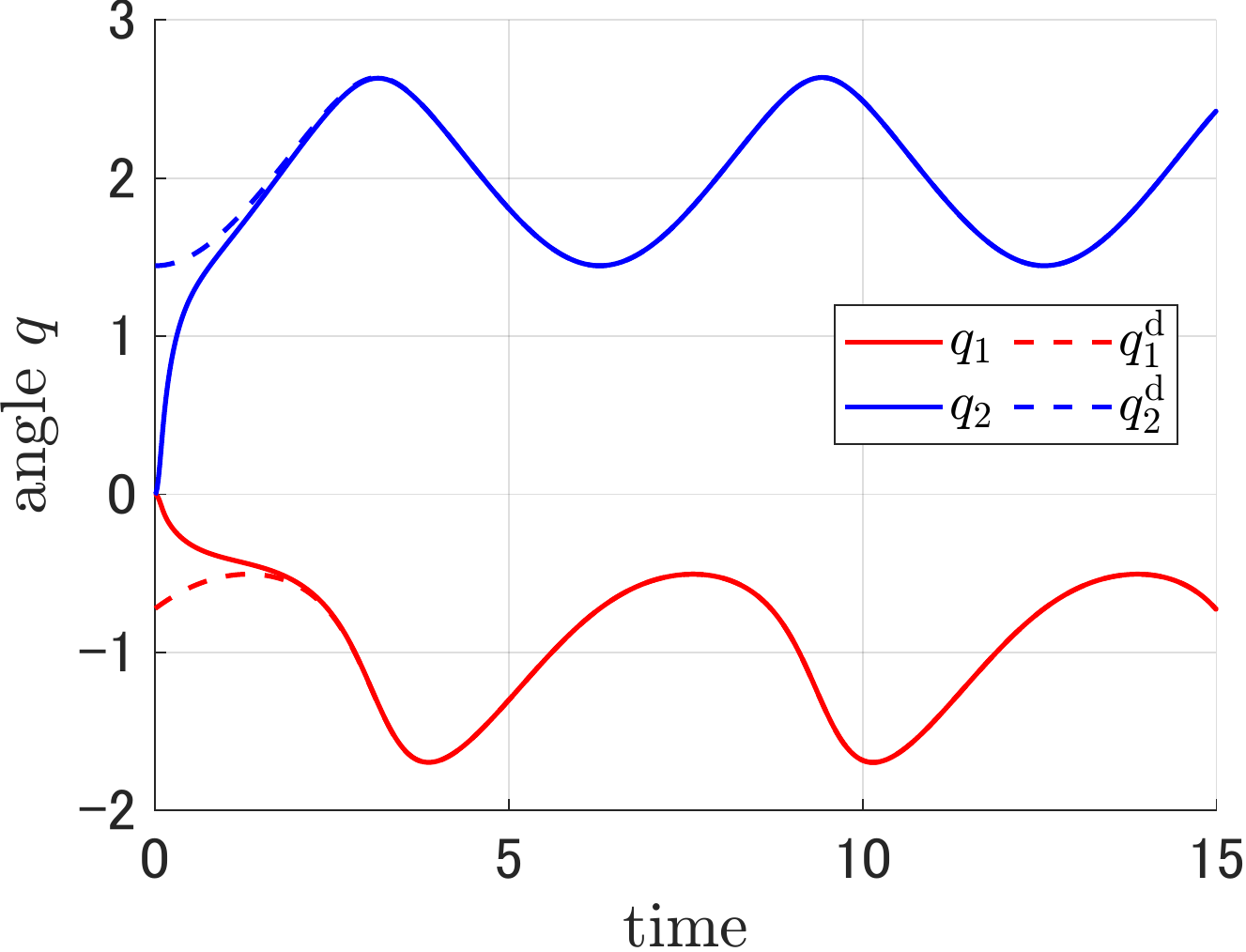}}%
    \subfigure[$U(\sigma) = 2\|\sigma\|_1$]{%
        \includegraphics[width=0.4\columnwidth]{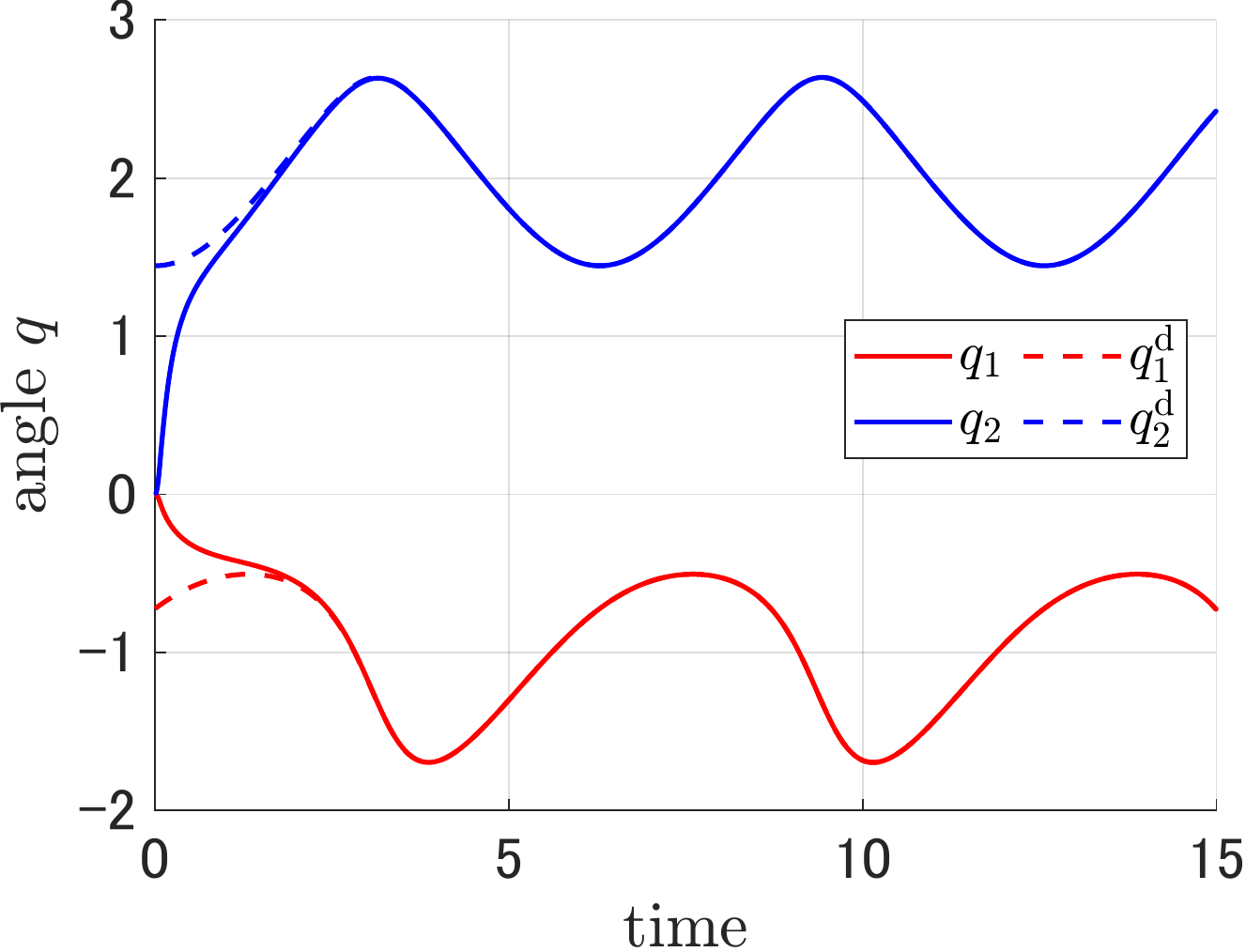}}%
    \caption{The responses of the angles}
    \label{fig:angle}
\end{figure}

\begin{figure}
    \centering
    \subfigure[$U(\sigma)=2\|\sigma\|_2^{1.3}$]{%
        \includegraphics[width=0.4\columnwidth]{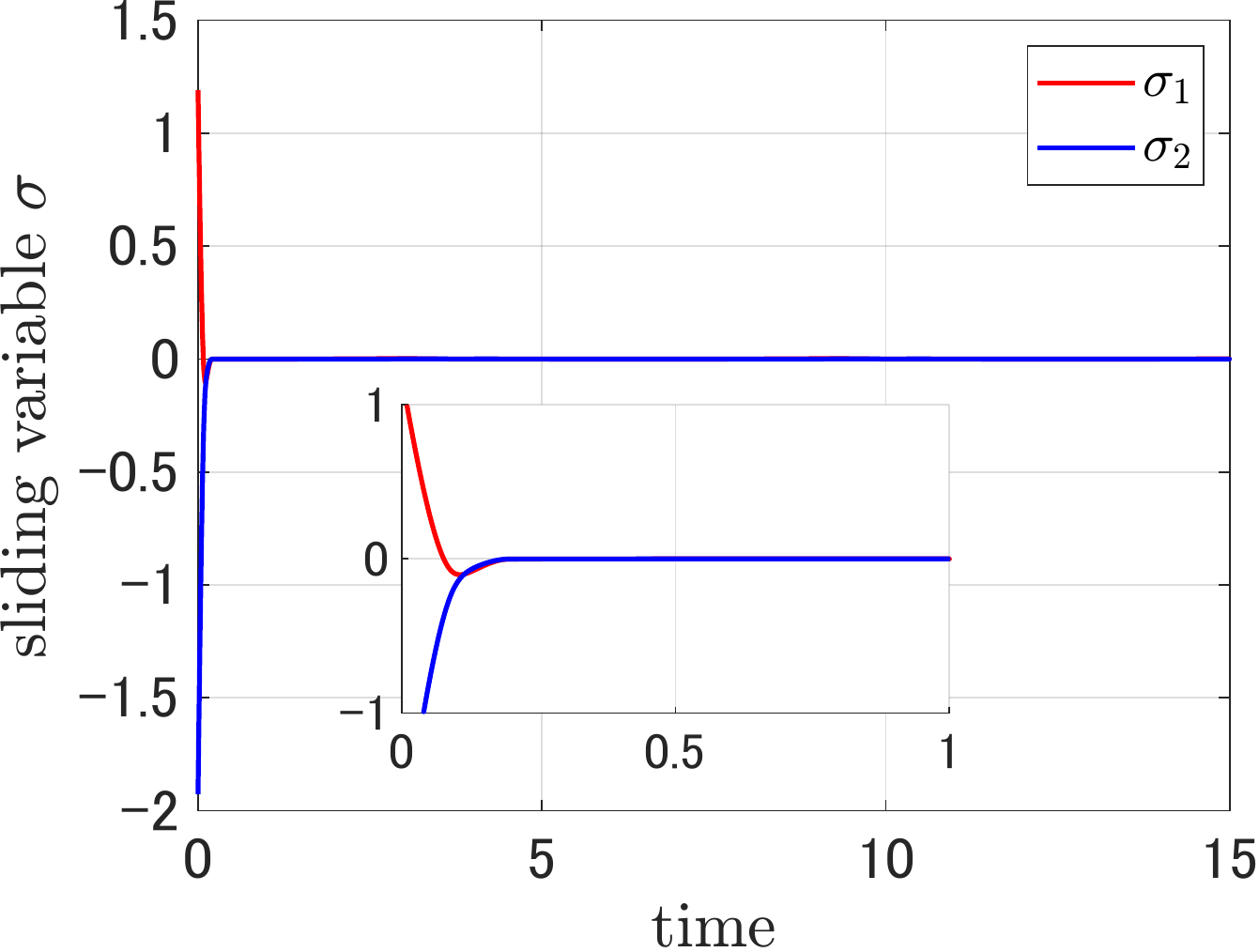}}%
    \subfigure[$U(\sigma) = 2\|\sigma\|_1$]{%
        \includegraphics[width=0.4\columnwidth]{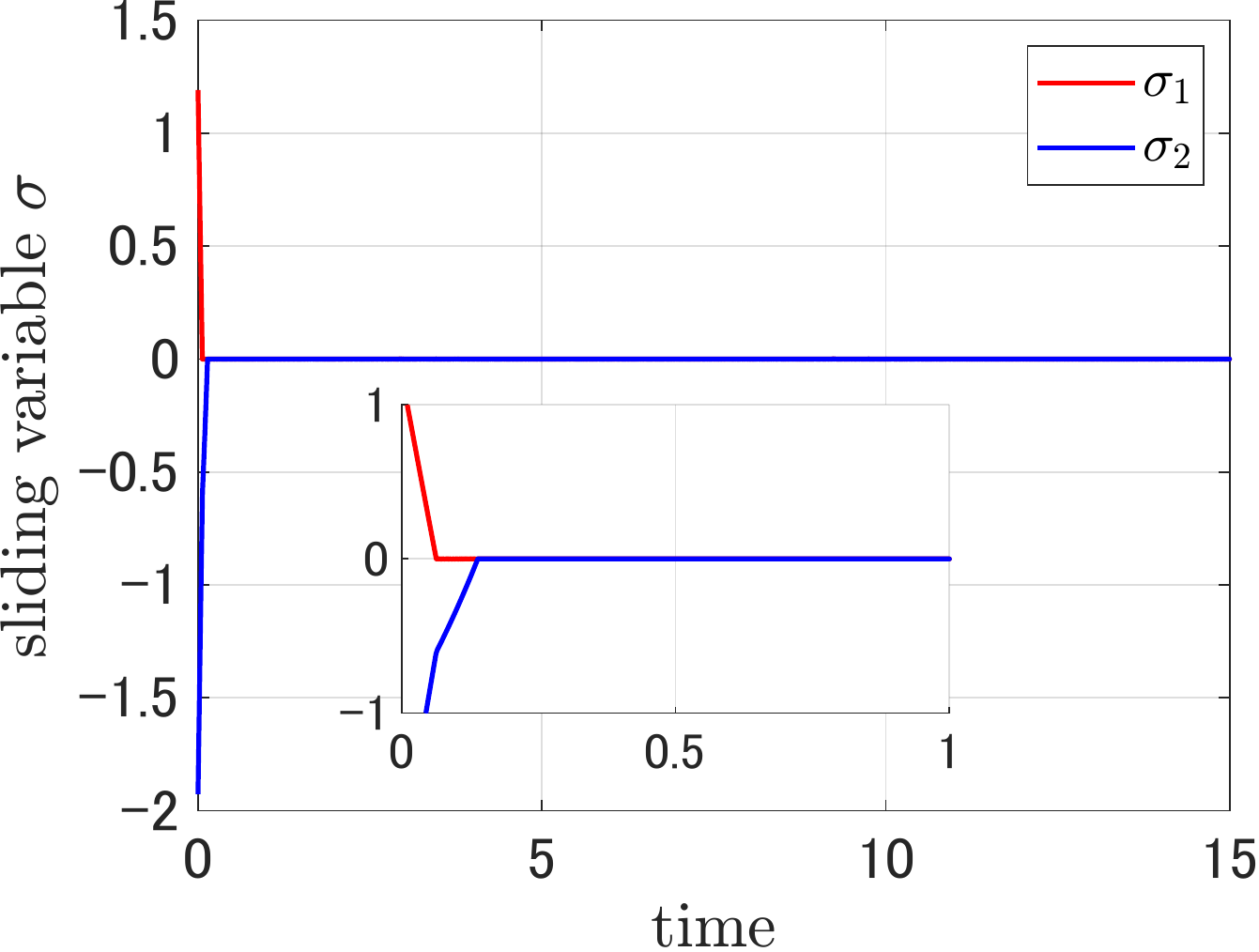}}%
    \caption{The responses of the sliding variables}
    \label{fig:sliding_variable}
\end{figure}

\begin{figure}
    \centering
    \subfigure[$U(\sigma)=2\|\sigma\|_2^{1.3}$]{%
        \includegraphics[width=0.4\columnwidth]{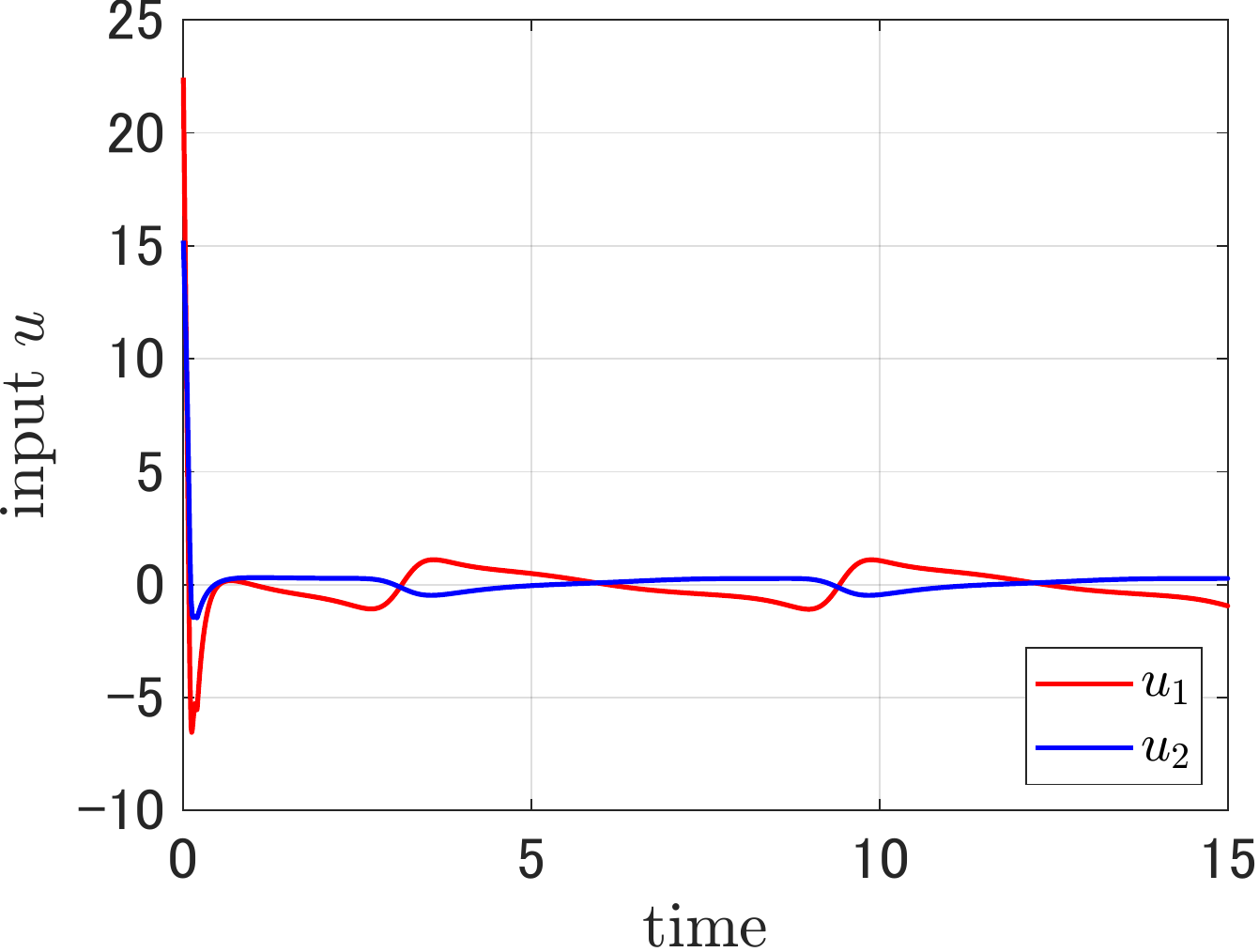}}%
    \subfigure[$U(\sigma) = 2\|\sigma\|_1$]{%
        \includegraphics[width=0.4\columnwidth]{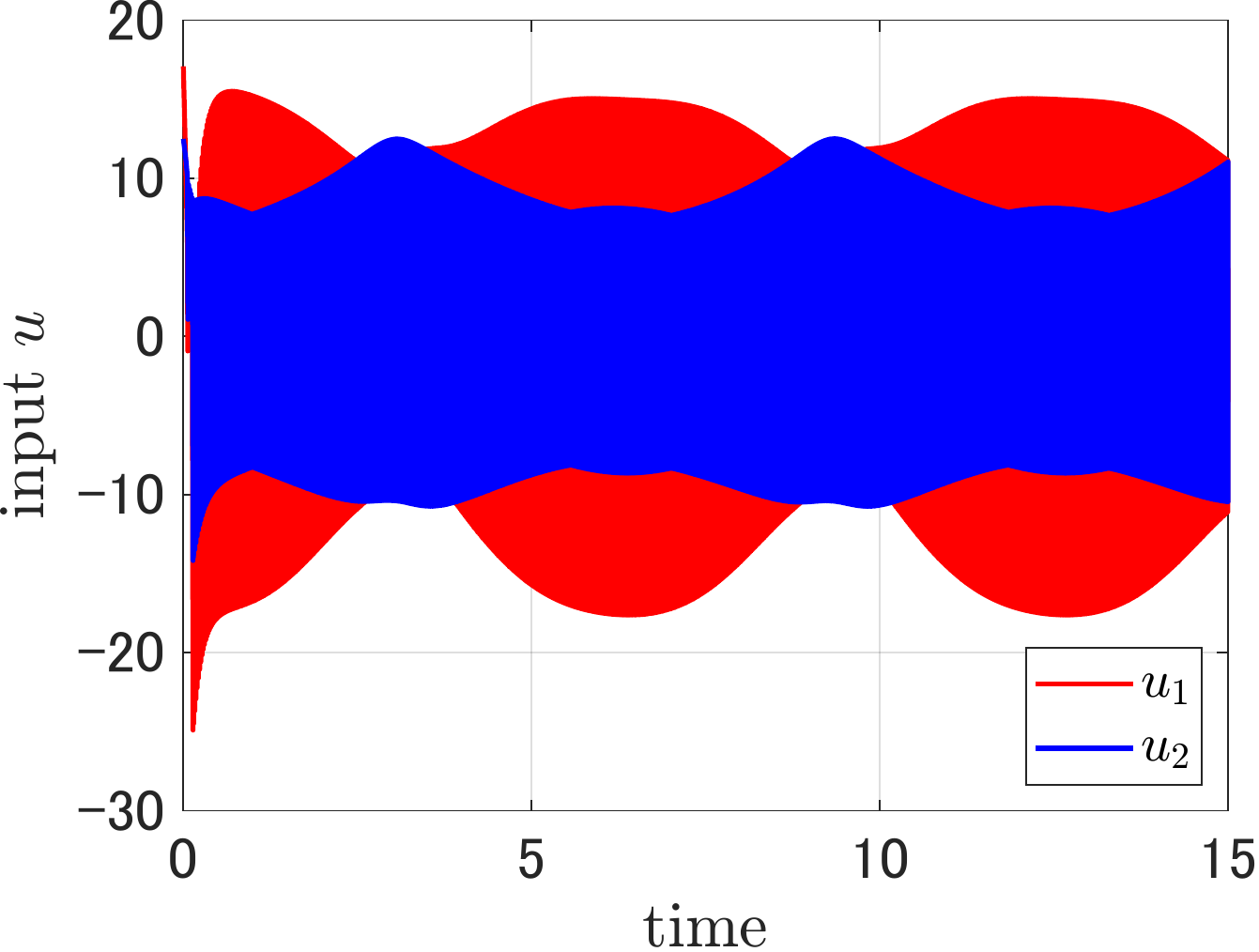}}%
    \caption{The responses of the inputs}
    \label{fig:input}
\end{figure}

\begin{figure}
    \centering
    \subfigure[$U(\sigma)=2\|\sigma\|_2^{1.3}$]{%
        \includegraphics[width=0.4\columnwidth]{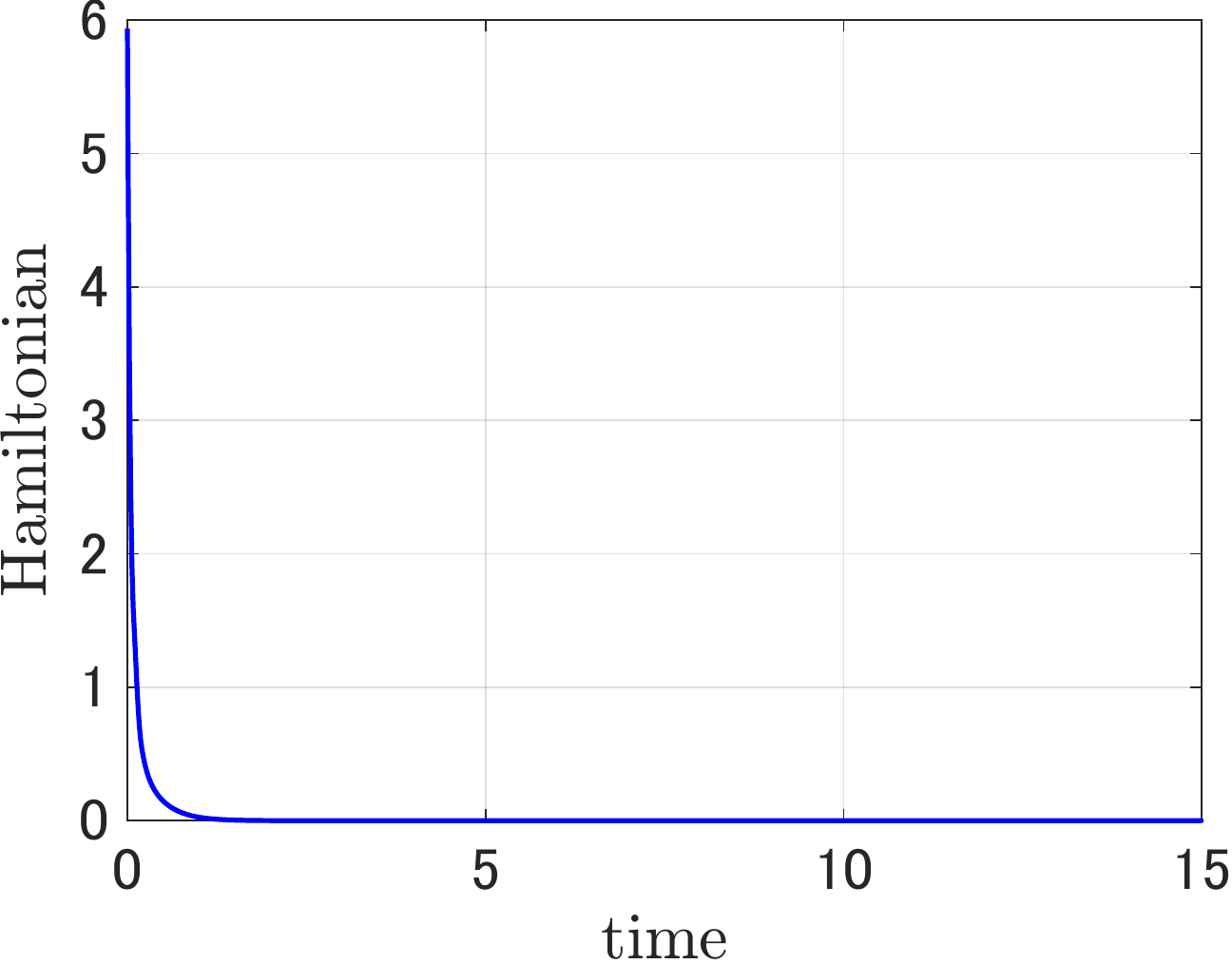}}%
    \subfigure[$U(\sigma) = 2\|\sigma\|_1$]{%
        \includegraphics[width=0.4\columnwidth]{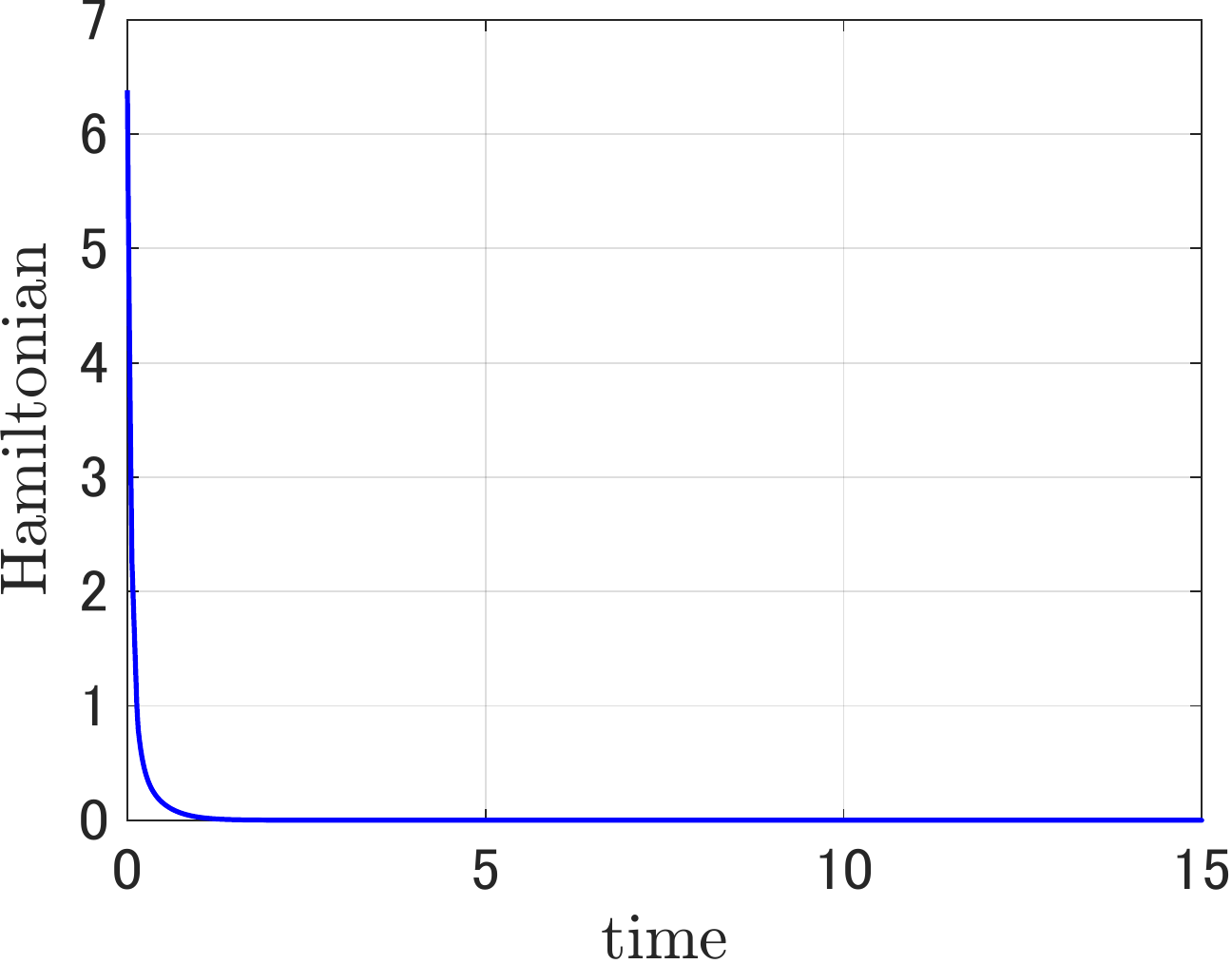}}%
    \caption{The responses of the Hamiltonian functions}
    \label{fig:Hamiltonian}
\end{figure}

\section{Conclusion}
This work proposes a new passivity-based controller that achieves sliding mode control with Lyapunov stability.
To construct the controller, we generalize KPES so that the freedom in selecting an artificial potential function increases.
The proposed controller represents stabilizing controllers consisting of both standard passivity-based controllers and sliding mode ones.
This makes it possible to adjust the trade-off between robustness against external disturbances and undesired chattering vibration while guaranteeing Lyapunov stability.
In addition, the controller can also be applied to trajectory tracking control.
Moreover, we analyze the robustness of the proposed control system against matched and unmatched disturbances.
The numerical examples demonstrate the effectiveness of the proposed method.





\end{document}